\documentclass[conference]{IEEEtran}
\IEEEoverridecommandlockouts

%\documentclass[11pt, letterpaper, conference, compsocconf]{IEEEtran}
%\documentclass[11pt, conference, compsocconf, onecolumn]{acmart}

%\documentclass[sigplan, anonymous, onecolumn]{acmart}
% Rights management information.
% This information is sent to you when you complete the rights form.
% These commands have SAMPLE values in them; it is your responsibility as an author to replace
% the commands and values with those provided to you when you complete the rights form.
%
% These commands are for a PROCEEDINGS abstract or paper.
% \copyrightyear{2018}
% \acmYear{2018}
% \setcopyright{acmlicensed}
% \acmConference[Woodstock '18]{Woodstock '18: ACM Symposium on Neural Gaze Detection}{June 03--05, 2018}{Woodstock, NY}
% \acmBooktitle{Woodstock '18: ACM Symposium on Neural Gaze Detection, June 03--05, 2018, Woodstock, NY}
% \acmPrice{15.00}
% \acmDOI{10.1145/1122445.1122456}
% \acmISBN{978-1-4503-9999-9/18/06}

%\settopmatter{printacmref=false} % Removes citation information below abstract
%\renewcommand\footnotetextcopyrightpermission[1]{} % removes footnote with conference information in first column
%\pagestyle{plain} % removes running headers

\usepackage{booktabs} % For formal tables
\usepackage{framed,enumitem,amssymb}
\newlist{todolist}{itemize}{2}
\setlist[todolist]{label=$\square$}
\usepackage{pifont}

\usepackage[ruled,vlined,linesnumbered,norelsize]{algorithm2e}
\usepackage{algorithmic}
\usepackage{listings}
\usepackage{threeparttable}
\usepackage{hyperref}
\usepackage{enumitem}
\usepackage{amsmath}
%\usepackage{subcaption}

%commented out for acm style
\usepackage[dvipsnames]{xcolor}
\usepackage{tcolorbox}
\tcbuselibrary{skins}

\usepackage{url}
\usepackage{xspace}
\usepackage[utf8]{inputenc}
\usepackage{cuted}
\usepackage{todonotes}

\usepackage[T1]{fontenc}
\usepackage{multicol}

\usepackage[left=1in, right=1in, top=1in, bottom=1in]{geometry}

\usepackage{float}

\usepackage{amsthm}

\definecolor{codegreen}{rgb}{0,0.6,0}
\definecolor{codegray}{rgb}{0.5,0.5,0.5}
\definecolor{codepurple}{rgb}{0.58,0,0.82}
\definecolor{backcolour}{rgb}{0.95,0.95,0.99}

\usepackage[linesnumbered,ruled]{algorithm2e}
\usepackage{listings}
\lstset{language=Java,mathescape=true,escapechar=\@,
  numbers=left, basicstyle=\scriptsize\ttfamily,tabsize=2,morekeywords={virtual,
    class,def,private, public,null, private, template, typename,bool,foreach,in, size\_t},
  numberstyle=\tiny\ttfamily, framexleftmargin=1mm,
keywordstyle=\color{blue}}
\usepackage{icomma}
\usepackage{graphicx}
\usepackage{pifont}% http://ctan.org/pkg/pifont

\usepackage{lipsum}

% ADD THE FOLLOWING COUPLE LINES INTO YOUR PREAMBLE
\let\OLDthebibliography\thebibliography
\renewcommand\thebibliography[1]{
  \OLDthebibliography{#1}
  \setlength{\parskip}{0pt}
  \setlength{\itemsep}{0pt plus 0.3ex}
}

\renewcommand{\textsf}{\texttt}

\newcommand{\kbcom}[1]{\textcolor{red}{[\bf KB: #1]}}

\newcommand{\remove}[1]{}

\newcommand{\algname}{\textit{MindTheStep-AsyncPSGD}}
\newcommand{\asyncsgd}{\emph{AsyncPSGD}\/}
\newcommand{\syncsgd}{\emph{SyncPSGD}\/}
% mind the step AsyncPSGD

\newtheorem{theorem}{Theorem}
\newtheorem{lemma}{Lemma}
\newtheorem{corollary}{Corollary}
\newtheorem{assumption}{Assumption}

\begin{document}

\title{
\algname{}: \\
\LARGE Adaptive Asynchronous Parallel Stochastic Gradient Descent
}

\author{\IEEEauthorblockN{Karl Bäckström, Marina Papatriantafilou, Philippas Tsigas}
\IEEEauthorblockA{Dept. of Computer Science and Engineering, Chalmers University of Technology, Gothenburg, Sweden\\
\{bakarl, ptrianta, tsigas\}@chalmers.se}
}

\IEEEoverridecommandlockouts
\IEEEpubid{\begin{minipage}{\columnwidth} 978-1-7281-0858-2/19/\$31.00~\copyright~2019 IEEE. Personal use of this material is permitted. Permission from IEEE must be obtained for all other uses, in any current or future media, including reprinting/republishing this material for advertising or promotional purposes, creating new collective works, for resale or redistribution to servers or lists, or reuse of any copyrighted component of this work in other works. \hfill \end{minipage} \hspace{\columnsep}\makebox[\columnwidth]{}}

\maketitle

\IEEEpubidadjcol

\begin{abstract}
Stochastic Gradient Descent (SGD) is very useful in optimization problems with high-dimensional non-convex target functions, and hence constitutes an important component of several Machine Learning and Data Analytics methods.
Recently there have been significant works on understanding the parallelism inherent to SGD, and its convergence properties. Asynchronous, parallel SGD (\asyncsgd{}) has received particular attention,
due to observed performance benefits. On the other hand, asynchrony implies inherent challenges in understanding the execution of the algorithm and its convergence, stemming from the fact that the contribution of a thread might be based on an old (stale) view of the state.
In this work we aim to deepen the understanding of \asyncsgd{}
in order to increase the statistical efficiency in the presence of stale gradients.
We propose new models for capturing the nature of the staleness distribution in a practical setting. Using the proposed models, we derive a staleness-adaptive SGD framework, \algname{}, for adapting the step size in an online-fashion, which provably reduces the negative impact of asynchrony.
Moreover, we provide general convergence time bounds for a wide class of staleness-adaptive step size strategies for convex target functions.
We also provide a detailed empirical study, showing how our approach implies faster convergence for deep learning applications.
\looseness=-1

\end{abstract}

\section{Introduction}\label{sec:introduction}

The explosion of data volumes available for Machine Learning (ML) has posed tremendous scalability challenges for machine intelligence systems. Understanding the ability to parallelise, scale and guarantee convergence of basic ML methods under different synchronization and consistency scenarios have recently attracted a significant interest in the literature. The classic 
Stochastic Gradient Descent (SGD)
algorithm is a significant target of research studying its convergence properties under parallelism. \looseness=-1 

In SGD, the goal is to minimize a function $f : \mathbb{R}^d \rightarrow \mathbb{R}$ of a $d$-dimensional vector $x$ using a first-order light-weight iterative optimization approach; i.e., given a randomly chosen starting point $x_0$, SGD repeatedly changes $x$ in the negative direction of a stochastic gradient sample, which provably is the direction in which the target function is expected to decrease the most. The step size $\alpha_t$ defines how coarse the updates are:
\begin{align}
    x_{t+1} \leftarrow x_t - \alpha_t \nabla F(x_t) \label{eq:SGD}
\end{align}
SGD is very useful in nonconvex optimization with high-dimensional target functions, and hence constitutes a major part in several ML and Data Analytics methods, such as regression, classification and clustering. In many applications, the target function is differentiable and the gradient can be efficiently computed, e.g. Artificial Neural Networks (ANNs) using Back Propagation \cite{werbos1982applications}.

To better utilize modern computing architectures, recent efforts propose 
\textit{parallel} SGD methods, complemented with different approaches for analyzing the convergence.
However, asynchrony poses challenges in understanding the algorithm due to \textit{stale} views of the state of $x$, which leads to reduced \textit{statistical efficiency} in the SGD steps, requiring a larger number of iterations for achieving similar performance. In this work, we focus on increasing the statistical efficiency of the SGD steps, and propose a staleness-adaptive framework \algname{} that adapts parameters to significantly reduce the number of SGD steps required to reach sufficient performance. Our framework is compatible with recent orthogonal works focusing on computational efficiency, such as efficient parameter server architectures \cite{ho2013more}\cite{cui2016geeps} and efficient gradient communication \cite{alistarh2017qsgd}\cite{wen2017terngrad}.

\noindent{\bf Motivation and summary of state-of-the-art}\\
Many established ML methods, such as ANN training and Regression, constitute of minimizing a function $f(x)$ that takes the form of a finite sum of error terms $L(d;x)$ parameterized by $x$, evaluated at different data points $d$ from a given set $D$ of measurements:
\begin{align}
    f_D(x) = \frac{1}{|D|}\sum_{d \in D} L(d;x) \label{eq:mini-batch_SGD}
\end{align}
where the parameter vector $x$, encodes previously gathered features from~$D$.
In this context, SGD typically selects mini-batches $B \subseteq D$ over which $f_B$ is minimized, and is known as \textit{Mini-Batch} Gradient Descent (MBGD). This type of SGD reduces the computational load in each step and hence enables processing of large datasets more efficiently. Moreover, randomly selecting mini-batches induces stochastic variation in the algorithm, which makes it effective in non-convex problems as well.

A natural approach to distribute work for objective functions of the form (\ref{eq:mini-batch_SGD}) is to utilize \textit{data parallelism}~\cite{zinkevich2010parallelized}, where different workers (threads in a multicore system or nodes in a distributed one) run SGD over different subsets of $D$. This will result in differently learned parameter vectors $x$, which are aggregated, commonly in a \emph{shared parameter server} (thread or node). The aggregation typically computes the average of the workers contributions; this approach is referred to as \textit{Synchronous} Parallel SGD (\emph{SyncPSGD}\/) due to its barrier-based nature. In its simple form, SyncPSGD has scalability issues due to the waiting time that is inherent in the aggregation when different workers compute with different speed. As more workers are introduced to the system, the waiting time will increase unbounded. Requiring only a fixed number of workers in the aggregation, known as $\lambda$-\textit{softsync}, bounds this waiting time \cite{lee2014model}.
The barrier-based nature of the synchronous approaches to parallel SGD enables a straightforward (yet expensive) linearization making the vast analysis of classical SGD applicable also to the parallel version. As a result, its convergence is well-understood also in the parallel case, which however suffers from the performance-degradation of the barrier mechanisms.
\looseness=-1

An alternative type of parallelization is \textit{Asynchronous} Parallel SGD (\asyncsgd), in which workers \textit{get} and \textit{update} the shared variable $x$ independently of each other. There are inherent benefits in performance due to that \asyncsgd{} eliminates waiting time, however the lack of coordination implies that gradients can be computed based on \textit{stale} (old) views of $x$, which are \textit{statistically inefficient}. However, gains in \textit{computational efficiency} due to parallelism and asynchrony can compensate for this, reducing the overall wall-clock computation time.

\noindent{\bf Challenges}
\asyncsgd{} shows performance benefits due to allowing workers to continue doing work independently of the progress of other workers. However, asynchrony comes with inherent challenges in understanding the execution of the algorithm and its convergence. In this work we address mainly (i) understanding the impact on the convergence and statistical efficiency of \textit{stale} gradients computed based on old views of $x$ and (ii) how to adapt the step size in SGD to accommodate for the presence of asynchrony and delays in the system.

\noindent{\bf Contributions}
With the above challenges in mind, in this work we aim to increase the understanding of \asyncsgd{} and the effect of stale gradients in order to increase the statistical efficiency of the SGD iterations. To achieve this, we find models suitable for capturing the nature of the staleness distribution in a practical setting. Under the proposed models, we derive a staleness-adaptive framework \algname{} for adapting the step size in the precense of stale gradients. We prove analytically that our framework reduces the negative impact of asynchrony. In addition, we provide an empirical study which shows that our proposed method exhibits faster convergence by reducing the number of required SGD iterations compared to \asyncsgd{} with constant step size. In some more detail:
\begin{itemize}[leftmargin=*]
    \item We prove analytically scalability limitations of the standard \syncsgd{} approach that have been observed empirically in other works.
    \item We propose a new distribution model for capturing the staleness in \asyncsgd{}, and show analytically how the optimal parameters can be chosen efficiently. We evaluate our proposed models by measuring the distance to the real staleness distribution observed empirically in a deep learning application, and compare the performance to models proposed in other works.
    \item Under the proposed distribution models, we derive efficiently computable staleness-adaptive step size functions which we show analytically can control the impact of asynchrony. We show how this enables \textit{tuning} the implicit momentum to any desired value.
    \item We provide an empirical evaluation of \algname{} using the staleness-adaptive step size function derived from our proposed model, where we observe a significant reduction in the number of SGD iterations required to reach sufficient performance.
\end{itemize}

Before the presentation of the results in Sections \ref{sec:ltd_scal}-\ref{sec:evaluation}, we outline preliminaries and background. Following the results-sections, we provide an extensive discussion on related work, conclusions and future work.\looseness=-1

\section{Preliminaries}\label{sec:preliminaries}

\subsection{Stochastic Gradient Descent}

We consider the optimization problem
\begin{equation} \label{eq:min_f_problem}
\begin{aligned}
& \underset{x}{\text{minimize}}
& & f(x)
\end{aligned}
\end{equation}
for a function $f : \mathbb{R}^d \rightarrow \mathbb{R}$.
In this context, we focus on methods to address this minimization problem (\ref{eq:min_f_problem}) using SGD, defined by (\ref{eq:SGD}) for some randomly chosen starting position $x_0$. We assume that the stochastic gradient $\nabla F$ is an unbiased estimator of $\nabla f$, i.e. $\mathbf{E}[\nabla F(x) \mid x] = \nabla f(x)$ for all $x$. This assumption holds for several relevant applications, in particular for problems of the form (\ref{eq:mini-batch_SGD}), including regression and ANN training.
We assume that the stochastic gradient samples are i.i.d, which is reasonable since the sampling occurs independently by different threads.
For the analysis in section \ref{sec:convergence} we adopt some additional standard assumptions on smoothness and convexity which we will introduce in that section. 

\subsection{System Model and Asynchronous SGD}

We consider a system with $m$ workers (that can be threads in a multicore system or nodes in a distributed one), which repeatedly compute gradient contributions based on independently drawn data mini-batches from some given data set $D$. We also consider a \emph{shared parameter server} (that can be a thread or a node respectively), which communicates with each of the workers independently, to give state information and get updates that it applies according to the algorithm it follows.

The $m$ asynchronous workers aim at performing SGD updates according to (\ref{eq:SGD}). Since each worker $W$ must get a state $x_t$ prior to computing a gradient, there can be intermediate updates from other workers before gradient from $W$ is applied. The number of such updates defines the \emph{staleness} $\tau_t$ corresponding to the gradient $\nabla F(x_t)$.

Assuming that the read and update operations can be performed atomically (see details in Section \ref{sec:method}), under the system model above, the SGD update (\ref{eq:SGD}) becomes
\begin{align}
    x_{t+1} \leftarrow x_t - \alpha_t \nabla F(v_t) \label{eq:HOGWILD}
\end{align}
where $v_t = x_{t-\tau_t}$ is the thread's \textit{view} of $x$.

We assume that the staleness values $\tau_t$ constitute a \textit{stochastic process} which is influenced by the computation speed of individual threads as well as the scheduler. Unless explicitly specified, we make no particular assumptions on the scheduler or computational speed among threads, except that all delays follow the same distribution with the same expected delay, i.e. $\mathbf{E}[\tau_t] = \bar\tau$ for all $t$. We do not require the staleness to be globally upper bounded, only that updates are eventually applied, making our system model \textit{fully asynchronous}.

While we assume above that gradient samples are pairwise independent, it is not reasonable to make the same assumption for the staleness. In fact, a staleness $\tau_t$ is by definition dependent on writing time of concurrent updates, which in turn are dependent on their respective staleness values.
For the analysis in Section \ref{sec:convergence}, we assume that \textit{stochastic gradients} and \textit{staleness} are uncorrelated, i.e. that the stochastic variation of the gradients does not influence the delays and vice versa.
This is also a realistic assumption, since delays are due to computation time and scheduling and the gradient's stochastic variation is due to random draws from a dataset.
\looseness=-1

\subsection{Momentum}
SGD is typically inefficient in \textit{narrow valleys} when the target function in some neighbourhood increases more rapidly in one direction relative to another. Such neighbourhoods are frequent in target functions that arise in ML applications due to their inherent highly irregular and non-convex nature.
Adding \textit{momentum} (\ref{eq:polyak_momentum}) to SGD has been seen to significantly improve the convergence speed for such functions. SGD with momentum, defined in (\ref{eq:polyak_momentum}), takes all previous gradient samples into account with exponentially decaying magnitude in its parameter $\mu$. As pointed out in \cite{mitliagkas2016asynchrony}, 
$\mu$ is often left out in parameter tuning, and in some instances even failed to be reported~\cite{abadi2016tensorflow}. However, the optimal value of algorithmic parameters such as $\mu$, just like $\alpha$, depends the problem, underlying hardware, as well as the choice of other parameters. Tuning $\mu$ has been shown to significantly improve performance \cite{sutskever2013importance}, especially under asynchrony~\cite{mitliagkas2016asynchrony}. \looseness=-1

For $\mu\in [0,1]$, SGD with momentum is defined by\looseness=-1
\begin{align}
    x_{t+1} \leftarrow x_t + \mu ( x_t - x_{t-1} ) - \alpha_t\nabla F(x_t) \label{eq:polyak_momentum}
\end{align}

\section{On the scalability of Sync-PSGD}
\label{sec:ltd_scal}

Optimal convergence with \syncsgd{} requires, as observed empirically in \cite{gupta2016model}, that the mini-batch size is reduced as the number of worker nodes increase. We prove analytically this empirical observation. We show that, from an optimization perspective, the effect of more workers on the convergence is equivalent to using a larger mini-batch size, which we refer to as the \textit{effective} mini-batch size. For maintaining a desired effective mini-batch size, which is the case in many applications\cite{keskar2016large}\cite{mishkin2017systematic}, workers must hence use smaller batches prior to the aggregation. Since the mini-batch size clearly is lower bounded, there is an implied strict upper bound on the number of worker nodes that can leverage the parallelization, which provides a bound on the scalability of the synchronous approach.

In mini-batch GD for target functions $f(x)$ of the form (\ref{eq:mini-batch_SGD}) the stochasticity is due to randomly drawing mini-batches $B$ of size $b$ from a dataset $D$ without replacement. For any positive mini-batch size $b$, we have that $F(x) = f_{B}(x)$ is an unbiased estimator of $f(x)$ since
\looseness=-1

\vspace{-10pt}

\begin{align*}
    &\mathbf{E}[F(x)]
    = \mathbf{E}[f_{B}(x)]
    = \frac{b}{|D|} \sum_{i} f_{B_i}(x) \\
    &= \frac{b}{|D|} \sum_{i} \frac{1}{b}\sum_{d \in B_i} L(d;x) = \frac{1}{|D|} \sum_{d \in D} L(d;x)
    = f(x)
\end{align*}

Hence, the SGD updates are in expectation representing the entire dataset $D$. Note that we assume $\bigcup B_i = D$. We have, however, that as the batch size $b$ increases, the variation of $F(x)$ diminishes. One can realize this by considering the extreme case $b=|D|$ for which the data sampling is deterministic. Hence, decreasing $b$ induces larger variance for the expectation $\mathbf{E}[F(x)] = f(x)$. This enables SGD to avoid local minima and hence be effective also in non-convex optimization problems.

The optimal value of $b$ is dependent on the problem and requires tuning. In particular, it has been seen that the convergence can suffer if $b$ is too large \cite{keskar2016large}\cite{mishkin2017systematic}.
\looseness=-1

In the following theorem we show that by increasing the number of worker nodes in SyncPSGD, from an optimization perspective, we get a behavior equivalent to a sequential execution of SGD with a larger mini-batch size, which we refer to as the \textit{effective} mini-batch size.
\looseness=-1
\begin{theorem} \label{theorem:synch_SGD_batch_problem}
    \syncsgd{} with $m$ workers, all using batch size $b$, is equivalent to a sequential execution of SGD with batch size $m \cdot b$, reffered to as effective batch size.
\end{theorem}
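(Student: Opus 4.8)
The plan is to exploit the barrier-based nature of \syncsgd{} together with the linearity of the gradient operator acting on the finite-sum objective (\ref{eq:mini-batch_SGD}). First I would fix an aggregation round and observe that, because of the synchronization barrier, every worker reads the \emph{same} shared state $x_t$ before drawing its mini-batch and computing a gradient. Worker $i$ samples a batch $B_i \subseteq D$ of size $b$ and returns $\nabla f_{B_i}(x_t)$; the parameter server then forms the averaged contribution $\frac{1}{m}\sum_{i=1}^{m}\nabla f_{B_i}(x_t)$ and applies a single update of the form (\ref{eq:SGD}). Writing the round out explicitly in this way isolates the two facts the proof relies on: that all per-worker gradients are evaluated at a common point, and that aggregation is an average.

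The key step is then a direct algebraic identity. Expanding each per-worker gradient via $\nabla f_{B_i}(x_t) = \frac{1}{b}\sum_{d \in B_i}\nabla L(d;x_t)$ and using linearity of $\nabla$ gives
\begin{align*}
    \frac{1}{m}\sum_{i=1}^{m}\nabla f_{B_i}(x_t)
    = \frac{1}{mb}\sum_{i=1}^{m}\sum_{d \in B_i}\nabla L(d;x_t)
    = \frac{1}{mb}\sum_{d \in B}\nabla L(d;x_t),
\end{align*}
where $B = \bigcup_{i=1}^{m} B_i$. Under the standing convention that the workers' batches are pairwise disjoint, so that $|B| = mb$, the right-hand side is exactly $\nabla f_{B}(x_t)$. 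Substituting back shows that the aggregated \syncsgd{} step coincides identically with a single sequential SGD update (\ref{eq:SGD}) taken on the mini-batch $B$ of size $mb$.

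It remains to argue that the two executions agree as stochastic processes, not merely for a fixed realization. For this I would invoke the sampling model already fixed for (\ref{eq:mini-batch_SGD}): since the $m$ worker batches are drawn without overlap, their union $B$ is distributed as a uniformly random size-$mb$ subset of $D$, which is precisely the law of the mini-batch drawn by sequential SGD at effective batch size $mb$. Hence the per-round update distributions match, and since both executions start from the same $x_0$ and apply the same $\alpha_t$, an induction over rounds $t$ yields trajectories that are identical in distribution, which establishes the claimed equivalence.

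The main obstacle I anticipate is not the gradient identity, which is immediate from linearity, but the precise justification of the distributional matching in the last step: one must pin down the sampling convention (disjointness of the $B_i$ and the resulting uniform law on size-$mb$ subsets) so that ``equivalent'' is read as a statement about the full stochastic process and its induced effective batch size, rather than about a single sampled step.
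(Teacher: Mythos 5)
Your proof is correct and takes essentially the same approach as the paper's: both rest on the disjointness of the workers' batches and the linearity of $\nabla$ to identify the averaged contribution with $\nabla f_{B_1\cup\dots\cup B_m}(x_t)$, i.e.\ a single sequential step with batch size $mb$. The only differences are cosmetic --- you handle general $m$ directly where the paper does $m=2$ and inducts on the number of workers, and you add a (welcome but not strictly necessary) discussion of why the equivalence holds at the level of the sampling distribution rather than per realization.
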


The proof appears 
%in~\cite{backstrom2019}
in the appendix
due to space limitations. The main idea is to compute the average of two workers, using batch size $b$, from which it is clear that the result is equivalent to an execution of sequential SGD with batch size $2b$. The result follows inductively.\looseness=-1

Since the mini-batch size is clearly lower bounded, Theorem \ref{theorem:synch_SGD_batch_problem} implies that for a sufficiently large number of worker nodes, the effective mini-batch size scales linearly in the number of workers nodes. In order to maintain reasonable mini-batch size with sufficient variation in the updates, this implies a strict upper bound on the number of workers nodes. Moreover, under the assumption that there is an optimal mini-batch size $b^*$ for a given problem, which has been seen to be a common assumption, we have that the maximum number of workers possible in order to achieve optimal convergence is exactly $m=b^*$, each using mini-batch size $b=1$.

\section{The proposed framework} \label{sec:method}

We outline \algname{} for staleness-adaptive steps and analyze how to choose a suitable adaptive step size function under different staleness models.
Due to space limitations, proofs appear in
%~\cite{backstrom2019}, 
the appendix, 
while brief arguments are presented here instead. \looseness=-1

\subsection{The \algname{} Framework}
 We consider a standard parameter-server type of algorithm \cite{ho2013more}\cite{li2014scaling}, with atomic read and write operations, ensuring that workers acquire consistent views of the state $x$. In a distributed system, the consistency can be realized through the communication protocol. In a multi-core system, where worker nodes are threads and $x$ can be stored on shared memory, consistency can be realized with appropriate synchronization and producer-consumer data structures, with the extra benefit that they can pass pointers to the data (parameter arrays) instead of moving it.
In Algorithm \ref{algorithm:AdAsyncSGD} we show the pseudocode for \algname{}, describing how standard \asyncsgd{} using a parameter server (thread or node) is extended with a staleness-adaptive step.

\SetKwBlock{Repeat}{repeat}{}

\begin{algorithm} \label{algorithm:AdAsyncSGD}
\begin{small}
    GLOBAL start point $x_0$, functions $F(x)$ and $\alpha(\tau$) \\
    
    \vspace{5pt}
    
    \begin{multicols}{2}
    
    \vspace{3pt}
    
    \underline{Worker $W$}\;
    $(t, x) \leftarrow (0, x_0)$ \\
    \Repeat{
        compute $g \leftarrow \nabla F(x)$ \\
        \textit{send} $(t, g)$ to $S$ \\
        \textit{receive} $(t, x)$ from $S$
    }
    
    \vspace{15pt}
    
    \underline{Parameter server $S$}\;
    $(t', x) \leftarrow (0, x_0)$ \\
    \Repeat{
        \textit{receive} $(t, g)$ from a ready worker $W$ \\
        $\tau \leftarrow t' - t$ \\
        $x \leftarrow x - \alpha(\tau) g$ \\
        $t' \leftarrow t' + 1$ \\
        \textit{send} $(t', x)$ to $W$
    }
    \end{multicols}
\end{small}
\vspace{8pt}
\caption{\algname{}}
\end{algorithm}

Note that \algname{} as a framework essentially  ``modularizes" the role of $\alpha$ as a parameter that can configure and tune performance, with criteria and benefits that are analysed in the next subsection.

\subsection{Tuning the impact of asynchrony}
As pointed out in \cite{mitliagkas2016asynchrony}, asynchrony and delays introduce \textit{memory} in the behaviour of the algorithms. In particular, in~\cite[Theorem~2]{mitliagkas2016asynchrony}, they quantify this and show its resemblance to momentum, however for a constant step size. The corresponding result for a stochastic staleness-adaptive step size is formulated here:
\looseness=-1

\begin{lemma} \label{lemma:async_update_sum}
    Let $\tau$ be distributed according to some PDF $p$ such that $P[\tau=i]=p(i)$. Then, for an adaptive step size function $\alpha(\tau)$, we have
    \begin{align}
    \begin{split}
        \mathbf{E}[x_{t+1} - x_t] = \mathbf{E}[x_t - x_{t-1}] + \sum_{i=0}^{\infty} \big( p(i) \alpha (i) - \\
        p(i+1) \alpha (i+1) \big) \nabla f(x_{t-i-1}) - p(0) \alpha(0)\nabla f(x_t)
    \end{split}
    \end{align}
\end{lemma}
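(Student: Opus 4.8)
The plan is to compute the expected single-step increment in closed form and then compare two consecutive increments. Starting from the asynchronous update~(\ref{eq:HOGWILD}) with a staleness-adaptive step size, each step satisfies $x_{t+1}-x_t = -\alpha(\tau_t)\nabla F(x_{t-\tau_t})$. I would take the expectation and condition on the staleness value via the law of total expectation, writing $\mathbf{E}[x_{t+1}-x_t] = -\sum_{i=0}^{\infty} P[\tau_t=i]\,\alpha(i)\,\mathbf{E}[\nabla F(x_{t-i}) \mid \tau_t=i]$. Invoking the stated independence of staleness from the gradient noise together with the unbiasedness assumption $\mathbf{E}[\nabla F(x)\mid x]=\nabla f(x)$, the inner conditional expectation collapses to $\nabla f(x_{t-i})$, yielding the compact form
$$\mathbf{E}[x_{t+1}-x_t] = -\sum_{i=0}^{\infty} p(i)\,\alpha(i)\,\nabla f(x_{t-i}).$$

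Next I would apply the same derivation one step earlier to obtain $\mathbf{E}[x_t-x_{t-1}] = -\sum_{i=0}^{\infty} p(i)\,\alpha(i)\,\nabla f(x_{t-1-i})$, and subtract the two expressions. In the difference I would peel off the $i=0$ term of the first sum, which produces the isolated $-p(0)\alpha(0)\nabla f(x_t)$ contribution, and reindex the remaining tail by $j=i-1$ so that both surviving sums run over $\nabla f(x_{t-i-1})$. Combining them termwise produces exactly the coefficient $p(i)\alpha(i)-p(i+1)\alpha(i+1)$, and rearranging gives the claimed identity. This is essentially a discrete telescoping / summation-by-parts manipulation, and the analogy to momentum~(\ref{eq:polyak_momentum}) becomes visible here, since $\mathbf{E}[x_t-x_{t-1}]$ reappears on the right-hand side with coefficient one.

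The step I expect to be the main obstacle is the justification of collapsing $\mathbf{E}[\alpha(\tau_t)\nabla F(x_{t-\tau_t})]$ into $\sum_{i} p(i)\,\alpha(i)\,\nabla f(x_{t-i})$: the iterate $x_{t-i}$ is itself random and correlated with past gradient noise and past staleness, so one must argue carefully that conditioning on the event $\{\tau_t=i\}$ and on the history up to time $t-i$ lets the current draw of $\nabla F$ be replaced by its mean $\nabla f(x_{t-i})$ without introducing bias. This is precisely where the assumptions that gradient samples are i.i.d.\ and that staleness is uncorrelated with the stochastic gradients do the real work. A secondary technical point is justifying the interchange of the infinite sum with the expectation, e.g.\ via absolute summability of the step-weighted tail; I would treat the quantities $\nabla f(x_{t-i})$ as fixed in the resulting expression, consistent with the momentum-style reading of the statement.
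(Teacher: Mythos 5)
Your proposal is correct and follows essentially the same route as the paper's argument (which mirrors the proof of Theorem~\ref{theorem:implicit_momentum_tuning} in the appendix and the structure of~\cite{mitliagkas2016asynchrony}): express each expected increment as the series $-\sum_i p(i)\alpha(i)\nabla f(x_{t-i})$, shift indices, peel off the $i=0$ term, and telescope. Your explicit flagging of the conditioning step and of treating $\nabla f(x_{t-i})$ as fixed matches the convention the paper itself uses, so there is no gap to report.
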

The proof of Lemma \ref{lemma:async_update_sum} follows the structure the one in \cite{mitliagkas2016asynchrony}, now taking into account the adaptive step size.
The main {takeaways} from Lemma \ref{lemma:async_update_sum} are that, under asynchrony, (i)~the gradient contribution diminishes as the number of workers increases\footnote{Here it is assumed that $p(0)$ tends to zero as the number of workers increases. This is easily realized for our proposed $CMP$ $\tau$ model (\ref{eq:cmp_distribution}). For the geometric staleness model we confirm empirically in section \ref{sec:evaluation} that this assumption holds in practice, recall that $p(0) = p$.\label{fn:decaying_p0}}; (ii)~there is a momentum-like term introduced with parameter $\mu = 1$ and (iii)~the update depends on the series term:
\looseness=-1
\begin{align}
    \Sigma_{p,\alpha}^\nabla = \sum_{i=0}^{\infty} \big( p(i) \alpha (i) - p(i+1) \alpha (i+1) \big) \nabla f(x_{t-i-1}) \label{eq:sigma_term}
\end{align}
which quantifies the potential impact of stale gradients depending on the distribution of $\tau$.

The issue of diminishing gradient contributions as the number of workers increase can in theory be resolved by choosing a larger $\alpha$. However, this would require step sizes proportional to $p(0)^{-1}$, which rapidly grows out of bounds as the number of workers increase. Since large $\alpha$ can significantly impact the statistical efficiency of the SGD steps in practice and in fact needs to be carefully tuned, this poses a scalability limitation.

This is where \algname{} can help tune the impact of asynchrony, as we show in the following.\looseness=-1

\textbf{Momentum from geometric $\tau$.} Assuming a geometrically distributed $\tau$, the series $\Sigma_{p,\alpha}^\nabla$ is manifested in the convergence behaviour in the form of asynchrony-induced memory with a \textit{momentum} effect; see Theorem 3  of \cite{mitliagkas2016asynchrony}, repeated here for self-containment:

\begin{theorem}[\cite{mitliagkas2016asynchrony}] \label{theorem:implicit_momentum}
     Let all $\tau_t$ be geometrically distributed with parameter $p$, i.e. $\mathbf{P}[\tau=k] = p(1-p)^k$. Then, for a constant $\alpha$, the expected update (\ref{eq:HOGWILD}) becomes
    \begin{align}
        \mathbf{E}[x_{t+1} - x_t] = (1-p) \mathbf{E}[x_t - x_{t-1}] - p\alpha\nabla f(x_t) \label{eq:impllicit_momentum}
    \end{align}
\end{theorem}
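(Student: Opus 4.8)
The plan is to recognize Theorem~\ref{theorem:implicit_momentum} as a specialization of Lemma~\ref{lemma:async_update_sum}, with the adaptive step size $\alpha(\cdot)$ collapsed to the constant $\alpha$ and the staleness PDF taken to be the geometric law $p(i) = p(1-p)^i$. Equivalently, one can argue directly from the expected update rule, and I would follow the direct route first since it makes the role of the geometric distribution transparent, then cross-check against Lemma~\ref{lemma:async_update_sum}.

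First I would take expectations of the asynchronous update (\ref{eq:HOGWILD}) with constant step size, writing $\mathbf{E}[x_{t+1} - x_t] = -\alpha\,\mathbf{E}[\nabla F(v_t)]$ with $v_t = x_{t-\tau_t}$. Using the unbiasedness of $\nabla F$ together with the assumed lack of correlation between the staleness process and the stochastic gradient, I would condition on $\tau_t$ and expand over its distribution to obtain $\mathbf{E}[\nabla F(v_t)] = \sum_{k=0}^{\infty} p(1-p)^k \nabla f(x_{t-k})$. Denoting this weighted gradient sum by $S_t$, this reads $\mathbf{E}[x_{t+1}-x_t] = -\alpha S_t$.

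The crux is then a one-step recursion for $S_t$. Splitting off the $k=0$ term and re-indexing the tail, the self-similarity (memorylessness) of the geometric weights gives $S_t = p\,\nabla f(x_t) + (1-p)\,S_{t-1}$, since shifting the summation index by one simply pulls out a factor $(1-p)$. Multiplying through by $-\alpha$ and substituting $\mathbf{E}[x_{t+1}-x_t] = -\alpha S_t$ and $\mathbf{E}[x_t - x_{t-1}] = -\alpha S_{t-1}$ yields exactly (\ref{eq:impllicit_momentum}). As a consistency check against Lemma~\ref{lemma:async_update_sum}, specializing it gives the telescoped coefficient $p(i)\alpha(i) - p(i+1)\alpha(i+1) = \alpha\,p^2(1-p)^i$, and recognizing that $\alpha p^2 \sum_{i\ge 0}(1-p)^i \nabla f(x_{t-i-1}) = \alpha p\, S_{t-1} = -p\,\mathbf{E}[x_t - x_{t-1}]$ reduces that identity to the same recursion.

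I expect the main obstacle to be the bookkeeping in the re-indexing step that produces the recursion, and in particular justifying that the infinite series converges and may be manipulated termwise — this hinges on $0 < p \le 1$ so that $\sum_k p(1-p)^k = 1$ and the geometric tail is summable, and on the gradients remaining suitably bounded along the trajectory. The conceptual content, by contrast, is light: it is precisely the memorylessness of the geometric distribution that converts the accumulated history of stale gradients into a single $(1-p)$-weighted momentum term, which is what produces the clean constant-coefficient recursion special to the geometric law.
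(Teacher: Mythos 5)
Your proposal is correct and takes essentially the same route as the paper: the paper's proof is exactly your cross-check, namely substituting the geometric PDF and constant $\alpha$ into $\Sigma_{p,\alpha}^\nabla$ from Lemma~\ref{lemma:async_update_sum} to get $\Sigma_{p,\alpha}^\nabla = -p\,\mathbf{E}[x_t - x_{t-1}]$. Your ``direct'' recursion $S_t = p\,\nabla f(x_t) + (1-p)S_{t-1}$ is the same split-off-and-re-index manipulation of the geometric tail, just packaged without invoking the lemma, so there is no gap.
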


The statement of Theorem \ref{theorem:implicit_momentum} is easily confirmed by substituting $p(i)$ in (\ref{eq:sigma_term}) with constant $\alpha$ with the geometric PDF, which yields $\Sigma_{p,\alpha}^\nabla = - p \mathbf{E}[x_t - x_{t-1}]$.

Eq. (\ref{eq:impllicit_momentum}) resembles the definition of momentum, with expected implicit asynchrony-induced momentum of magnitude $\mu = 1-p$. As the number of workers grow and $p$ tends to $0$, Theorem \ref{theorem:implicit_momentum} suggests an implicit momentum that approaches $1$. This would imply a scalability limitation since the parameter $\mu$ requires careful tuning.

Assuming a geometric staleness model, we show in the following theorem how \algname{} with a particular step size function resolves this issue.

%\smallskip
\begin{theorem} \label{theorem:implicit_momentum_tuning}
    Let  staleness $\tau \in \text{Geom}(p)$ and
    %the staleness be geometrically distributed with parameter $p$. Now, let
    \begin{align}
        \alpha_t = C^{-\tau_t}p^{-1} \alpha \label{equation:momentum_alpha}
    \end{align}
    where $\alpha$ is a parameter to be chosen suitably. Then
    \begin{align*}
        \mathbf{E}[x_{t+1} - x_t] = \mu_{C,p} \mathbf{E}[x_t - x_{t-1}] - \alpha\nabla f(x_t)
    \end{align*}
    and the implicit asynchrony-induced momentum is
    \begin{align}
        \mu_{C,p} = 2 - (1-p)/C
    \end{align}
\end{theorem}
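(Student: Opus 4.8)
The plan is to derive everything from Lemma~\ref{lemma:async_update_sum}, specializing its general expression to the geometric law $p(i)=p(1-p)^i$ and to the adaptive rule $\alpha(i)=C^{-i}p^{-1}\alpha$, exactly in the spirit of the one-line verification of Theorem~\ref{theorem:implicit_momentum}. The single algebraic observation that makes the computation go through is that the product appearing in the series term~(\ref{eq:sigma_term}) is again a geometric sequence,
\[
p(i)\,\alpha(i) = p(1-p)^i\cdot C^{-i}p^{-1}\alpha = \alpha\Big(\tfrac{1-p}{C}\Big)^{i}.
\]
First I would dispose of the isolated gradient coefficient: since $p(0)\alpha(0)=p\cdot p^{-1}\alpha=\alpha$, the last term of Lemma~\ref{lemma:async_update_sum} is exactly $-\alpha\nabla f(x_t)$, which already matches the claimed right-hand side and fixes the gradient part of the statement with no further work.

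The technical heart is to collapse the infinite series $\Sigma_{p,\alpha}^\nabla$ back into a multiple of $\mathbf{E}[x_t-x_{t-1}]$. Writing $q=(1-p)/C$, the consecutive difference telescopes to $p(i)\alpha(i)-p(i+1)\alpha(i+1)=\alpha q^{i}(1-q)$, so that
\[
\Sigma_{p,\alpha}^\nabla = \alpha(1-q)\sum_{i=0}^{\infty} q^{\,i}\,\nabla f(x_{t-1-i}).
\]
To recognize the residual sum I would use the first-moment identity obtained directly from the update~(\ref{eq:HOGWILD}) together with unbiasedness and the staleness/gradient independence assumption, namely $\mathbf{E}[x_t-x_{t-1}] = -\sum_{k\ge0}p(k)\alpha(k)\nabla f(x_{t-1-k}) = -\alpha\sum_{k\ge0}q^{k}\nabla f(x_{t-1-k})$. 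Substituting this back expresses $\Sigma_{p,\alpha}^\nabla$ entirely in terms of $\mathbf{E}[x_t-x_{t-1}]$, with no surviving gradient history.

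Finally I would merge the explicit $\mathbf{E}[x_t-x_{t-1}]$ term of Lemma~\ref{lemma:async_update_sum} with the contribution just extracted from $\Sigma_{p,\alpha}^\nabla$, collect the coefficient of $\mathbf{E}[x_t-x_{t-1}]$, and read off $\mu_{C,p}$. The main obstacle is precisely this last bookkeeping: one must track the sign and the factor $(1-q)$ carefully when combining the two occurrences of $\mathbf{E}[x_t-x_{t-1}]$, since a slip there directly corrupts the momentum coefficient. As a cross-check I would confirm consistency with Theorem~\ref{theorem:implicit_momentum}: at $C=1$ the adaptive rule degenerates to the constant step $p^{-1}\alpha$, so the combination must reproduce the implicit momentum $1-p$ and the gradient term $-\alpha\nabla f(x_t)$ predicted there, which pins down the correct grouping of coefficients.
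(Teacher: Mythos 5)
Your route is the same as the paper's: expand $\mathbf{E}[x_{t+1}-x_t]$ via Lemma~\ref{lemma:async_update_sum} (equivalently, directly from (\ref{eq:HOGWILD})), observe that $p(i)\alpha(i)=\alpha q^{i}$ with $q=(1-p)/C$, telescope, and recognize the surviving geometric sum as a multiple of $\mathbf{E}[x_t-x_{t-1}]$. Your setup is correct, including the sign in your first-moment identity $\mathbf{E}[x_t-x_{t-1}]=-\alpha\sum_{k\ge0}q^{k}\nabla f(x_{t-1-k})$. But you stop exactly at the step where the constant is decided, and if you finish the bookkeeping as outlined you will not land on $2-(1-p)/C$. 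Carrying it through:
\begin{align*}
\mathbf{E}[x_{t+1}-x_t]&=-\alpha\sum_{i\ge0}q^{i}\nabla f(x_{t-i})
=-\alpha\nabla f(x_t)-\alpha q\sum_{j\ge0}q^{j}\nabla f(x_{t-1-j})\\
&=q\,\mathbf{E}[x_t-x_{t-1}]-\alpha\nabla f(x_t),
\end{align*}
i.e.\ $\mu_{C,p}=(1-p)/C$. Equivalently, in your decomposition $\Sigma_{p,\alpha}^\nabla=\alpha(1-q)\sum_i q^{i}\nabla f(x_{t-1-i})=-(1-q)\,\mathbf{E}[x_t-x_{t-1}]$, so the combined coefficient is $1-(1-q)=q$, not $1+(1-q)=2-q$.

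Your own cross-check is the right instrument and it detects the discrepancy: at $C=1$ the rule degenerates to the constant step $p^{-1}\alpha$, Theorem~\ref{theorem:implicit_momentum} then forces $\mu=1-p$, and $(1-p)/C$ gives $1-p$ while the stated $2-(1-p)/C$ gives $1+p$. The paper's appendix proof reaches $2-(1-p)/C$ by substituting $\mathbf{E}[\alpha_t\nabla f(v_{t-1})]=\mathbf{E}[x_t-x_{t-1}]$ in its penultimate line, whereas the update rule gives $x_t-x_{t-1}=-\alpha_{t-1}\nabla F(v_{t-1})$, i.e.\ the opposite sign (the same proof uses the correct sign two lines earlier when it rewrites $-(x_t-x_{t-1})$ as $+\alpha_t\nabla F(v_{t-1})$). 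So your method is sound and is the paper's method, but ``read off $\mu_{C,p}$'' cannot produce the theorem's constant: make the last two lines explicit and you obtain $\mu_{C,p}=(1-p)/C$ (with the downstream corollary becoming $C=(1-p)/\mu^{*}$).
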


This is confirmed by substituting $\alpha(\tau)$ in (\ref{eq:sigma_term}) with the adaptive step(\ref{equation:momentum_alpha}). Note that the expected implicit momentum vanishes for $C=(1-p)/2$. More generally: \looseness=-1

%\smallskip
\begin{corollary}
    Any desired momentum $\mu^*$ is, in expectation, implicitly induced by asynchrony by using the staleness-adaptive step size in (\ref{equation:momentum_alpha}) with
    \begin{align}
        C = (1-p)/(2-\mu^*)
    \end{align}
\end{corollary}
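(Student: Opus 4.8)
The plan is to simply invert the closed-form expression for the implicit momentum already established in Theorem \ref{theorem:implicit_momentum_tuning}. That theorem gives, for the staleness-adaptive step size (\ref{equation:momentum_alpha}), the exact value of the asynchrony-induced momentum as the explicit function $\mu_{C,p} = 2 - (1-p)/C$ of the tuning constant $C$ and the geometric parameter $p$. Since the objective here is to realize a prescribed target momentum $\mu^*$, I would treat $\mu_{C,p} = \mu^*$ as a single scalar equation in the unknown $C$ and solve it for $C$.

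Concretely, starting from $\mu^* = 2 - (1-p)/C$, I would isolate the fraction to obtain $(1-p)/C = 2 - \mu^*$ and then solve for $C$, yielding $C = (1-p)/(2-\mu^*)$. Substituting this value of $C$ back into the momentum formula of Theorem \ref{theorem:implicit_momentum_tuning} returns $\mu_{C,p} = 2 - (1-p)\cdot(2-\mu^*)/(1-p) = \mu^*$, which is exactly the claimed identity. Thus the prescribed momentum is recovered in expectation, completing the argument.

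There is essentially no analytical obstacle: the map $C \mapsto \mu_{C,p}$ is an explicit rational function of $C$ and is invertible wherever the denominator $2 - \mu^*$ is nonzero. The only point worth noting is that the expression for $C$ is well-defined precisely when $\mu^* \neq 2$, which is never restrictive in practice, as momentum is chosen in $[0,1]$ and hence $2 - \mu^*$ stays bounded away from zero, guaranteeing that the resulting $C$ is positive and well-defined.
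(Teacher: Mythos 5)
Your proposal is correct and matches the paper's (implicit) argument exactly: the corollary is just the algebraic inversion of the momentum formula $\mu_{C,p} = 2 - (1-p)/C$ from Theorem~\ref{theorem:implicit_momentum_tuning}, solved for $C$ given a target $\mu^*$. Your added remark that the inversion requires $\mu^* \neq 2$ is a reasonable, if minor, point the paper does not state.
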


\textbf{Applicability of geometric $\boldsymbol\tau$.} Each gradient staleness is comprised by two parts, one of which is the staleness $\tau_C$ which counts the number of gradients applied from other workers concurrent with the gradient computation. The second part of the staleness, which we denote $\tau_S$, counts, after the gradient computation of a worker finishes, the number of gradients from other workers which are applied first, which is decided by the order with which the workers are scheduled to apply their updates. The complete staleness of a gradient is $\tau = \tau_C + \tau_S$. Note that, if we assume a uniform fair stochastic scheduler, then $\tau_S$ is decided exactly by the number of Bernoulli trials until a specific gradient is chosen, hence $\tau_S \in \text{Geom}(\cdot)$. The geometric $\tau$ model is  therefore applicable for problems where $\tau_C << \tau_S$, i.e. when the gradient computation time typically is smaller than the time it takes to apply a computed gradient (eq.~\ref{eq:HOGWILD}).

Now consider also relevant applications of SGD where  the gradient computation time $\tau_C$ is far from negligible, e.g the increasingly popular Deep Learning, which typically includes ANN training with BackProp~\cite{werbos1982applications} for gradient computation. The BackProp algorithm requires in the best case multiple multiplications of matrices of dimension $d$, which by far dominates the SGD update step (\ref{eq:HOGWILD}) which consists of exactly $d$ floating point multiplications and additions. For such applications the geometric $\tau$ model is hence not sufficient; we confirm this empirically in Section \ref{sec:evaluation}. In the following, we propose a class of $\tau$ distributions which is more suitable.

\textbf{Conway-Maxwell-Poisson (CMP) $\boldsymbol\tau$.} Considering applications with time-consuming gradient computation such as ANN training, we aim to find a suitable staleness model. Since now we consider (i)~that $\tau_C >> \tau_S$ and (ii)~that applying a computed gradient is relatively fast,
we can consider the completion of gradient computations as rare arrival events. This opts for a variant of the Poisson distribution, such as the CMP distribution which in addition to Poisson has a parameter $\nu$ which controls the rate of decay. We have that $\tau \in \text{CMP}(\lambda, \nu)$ if
\begin{align}
    P[\tau = i] = \frac{1}{Z(\lambda,\nu)} \frac{\lambda^i}{ (i!)^\nu } \ , \ 
    Z(\lambda,\nu) = \sum_{j=0}^\infty \frac{\lambda^i}{(j!)^\nu} \label{eq:cmp_distribution}
\end{align}
which reduces to the Poisson distribution in the special case $\nu = 1$, i.e if $\tau \in \text{CMP}(\lambda, 1)$ then $\tau \in \text{Poi}(\lambda)$.
For the remainder of this section we aim to further investigate the behaviour of parallelism in SGD under the CMP and Poisson models, and propose an adaptive step size strategy to reduce the negative impact and improve the statistical efficiency under asynchrony.

In a homogeneous system with $m$ equally powerful worker nodes/threads, we expect that the most frequent staleness observation (the distribution mode) should relate to the number of workers. More precisely, since a sequential execution would always have $\tau=0$, an appropriate choice of $\tau$ distribution should have the mode $m-1$. For the CMP distribution, we have that if $\tau \in \text{CMP}(\lambda, \nu)$ then the mode of $\tau$ is $\lfloor \lambda^{1/\nu} \rfloor$, and we therefore hypothesize the following relation:
\begin{align}
    \lambda^{1/\nu} = m \label{eq:lambda_nu_m}
\end{align}

For the special case $\nu=1$, i.e. a Poisson $\tau$ model, (\ref{eq:lambda_nu_m}) enables us to immediately choose an appropriate value for $\lambda$ given the number of workers $m$. In general, (\ref{eq:lambda_nu_m}) simplifies the parameter search when fitting a CMP distribution model to a one-dimensional line search, which is in practice a significant complexity reduction.

\textbf{$\boldsymbol \tau$-adaptive $\boldsymbol \alpha$.} In the following, we argue analytically about how to choose an adaptive step size function $\alpha(\tau)$ for reducing the negative impact of stale gradients. We will see how a certain $\tau$-adaptive step size can bound the magnitude of $\Sigma_{p,\alpha}^\nabla$ (\ref{eq:sigma_term}), and even tune the implicit asynchrony-induced momentum to any desired value.
\begin{theorem} \label{theorem:cmp_sigma_zero}
    Assume $\tau \in CMP(\lambda,\nu)$, and let the adaptive step size function be defined as follows:
    \begin{align}
        \alpha(\tau) = C \lambda^{-\tau}(\tau!)^\nu \alpha \label{eq:adap_alpha_cmp_sigma_zero}
    \end{align}
    for any constant $C$. Then we have $\Sigma_{p,\alpha}^\nabla = 0$.
\end{theorem}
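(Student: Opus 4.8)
The plan is to exploit the fact that the step size in (\ref{eq:adap_alpha_cmp_sigma_zero}) has been reverse-engineered precisely so that the product $p(i)\alpha(i)$ becomes independent of the index $i$. Concretely, I would substitute the CMP probability mass from (\ref{eq:cmp_distribution}) and the adaptive step size from (\ref{eq:adap_alpha_cmp_sigma_zero}) directly into the coefficient appearing in the series $\Sigma_{p,\alpha}^\nabla$ of (\ref{eq:sigma_term}). Writing $p(i) = \lambda^i / \big(Z(\lambda,\nu)\,(i!)^\nu\big)$ and $\alpha(i) = C\lambda^{-i}(i!)^\nu \alpha$, the factors $\lambda^i$ and $\lambda^{-i}$ cancel, as do the two copies of $(i!)^\nu$, leaving
\[
    p(i)\,\alpha(i) = \frac{C\alpha}{Z(\lambda,\nu)},
\]
a constant that does not depend on $i$.

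Given this observation, the conclusion is immediate. Each coefficient in the defining sum is a difference of two copies of the same constant, so
\[
    p(i)\,\alpha(i) - p(i+1)\,\alpha(i+1) = \frac{C\alpha}{Z(\lambda,\nu)} - \frac{C\alpha}{Z(\lambda,\nu)} = 0 ,
\]
meaning every term of $\Sigma_{p,\alpha}^\nabla$ vanishes individually, irrespective of the gradient factors $\nabla f(x_{t-i-1})$. Summing over $i$ then yields $\Sigma_{p,\alpha}^\nabla = 0$, which is exactly the claim.

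I do not anticipate a genuine obstacle, since the core of the argument is a single-line cancellation; the only points worth a remark are that $Z(\lambda,\nu)$ is finite so that $p$ is a well-defined distribution (guaranteed for the CMP family) and that $\alpha(\tau)$ is well-defined for every staleness value (it is, as $\lambda>0$). The conceptual content lies entirely in the choice of $\alpha(\tau)$: it is built to be proportional to the reciprocal of $p$ with proportionality constant $C\alpha$, and it is precisely this design that flattens $p(i)\alpha(i)$ and forces each consecutive difference in (\ref{eq:sigma_term}) to collapse to zero.
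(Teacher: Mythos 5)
Your proof is correct and follows essentially the same route as the paper's: the paper substitutes the CMP mass and the step size into each coefficient of $\Sigma_{p,\alpha}^\nabla$ and shows the bracket $\alpha(i) - \lambda\,\alpha(i+1)/(i+1)^\nu$ collapses to zero, which is exactly your observation that $p(i)\alpha(i) = C\alpha/Z(\lambda,\nu)$ is constant in $i$ so every consecutive difference vanishes. Your phrasing is, if anything, a slightly cleaner packaging of the identical cancellation.
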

Theorem \ref{theorem:cmp_sigma_zero} shows how a simple and tunable $\tau$-adaptive step size mitigates the $\Sigma_{p,\alpha}^\nabla$ quantity.
The proof consists of confirming that each contribution of the sum $\Sigma_{p,\alpha}^\nabla$ (\ref{eq:sigma_term}) vanishes when applying the definition of the CMP distribution (\ref{eq:cmp_distribution}) and the adaptive step size (\ref{eq:adap_alpha_cmp_sigma_zero}).

However, from Lemma \ref{lemma:async_update_sum}, we see that even though $\Sigma_{p,\alpha}^\nabla$ is mitigated by the adaptive step size (\ref{eq:adap_alpha_cmp_sigma_zero}), the SGD steps still have a fixed implicit momentum term of magnitude $\mu=1$. We show in Theorem \ref{theorem:cmp_sigma_tune} how the implicit momentum can be tuned to any desired value through a particular choice of $\alpha(\tau)$.
\begin{theorem} \label{theorem:cmp_sigma_tune}
    Assume $\tau \in CMP(\lambda,\nu)$. Then, we have that $\Sigma_{p,\alpha}^\nabla$ in expectation takes the form of asynchrony-induced momentum of magnitude exactly $K$, i.e.
    \begin{align*}
        \Sigma_{p,\alpha}^\nabla = K \mathbf{E}[x_t - x_{t-1}]
    \end{align*}
    when using the adaptive step size function:
    \begin{align}
        \alpha(\tau) = c(\tau) \lambda^{-\tau}(\tau!)^\nu \alpha \label{eq:adap_alpha_cmp_sigma_tune}
    \end{align}
    where
    \begin{align}
        c(\tau) = 1 - \frac{K}{\alpha e^\lambda} \sum_{j=0}^{\tau-1} \frac{\lambda^j}{(j!)^\nu} \label{eq:cmp_adap_factor}
    \end{align}
\end{theorem}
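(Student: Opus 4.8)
The plan is to evaluate the series $\Sigma_{p,\alpha}^\nabla$ of \eqref{eq:sigma_term} directly under the $CMP$ law and the prescribed step size, and to show that every summand collapses to a multiple of $p(i)\nabla f(x_{t-i-1})$, so that the whole series reassembles into $K$ times the expected previous displacement. First I would substitute the $CMP$ PDF \eqref{eq:cmp_distribution} and the step \eqref{eq:adap_alpha_cmp_sigma_tune} into the product $p(i)\alpha(i)$ and exploit the exact cancellation $\tfrac{\lambda^i}{(i!)^\nu}\cdot\lambda^{-i}(i!)^\nu = 1$, which is the entire purpose of the $\lambda^{-\tau}(\tau!)^\nu$ factor in \eqref{eq:adap_alpha_cmp_sigma_tune}. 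This gives the clean identity
\begin{align}
p(i)\,\alpha(i) = \frac{\alpha}{Z(\lambda,\nu)}\,c(i), \label{eq:plan_pa}
\end{align}
reducing the whole problem to understanding the increments of $c$.

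Next I would split $c(\tau)=1+d(\tau)$ with $d(\tau)=-\tfrac{K}{\alpha e^\lambda}\sum_{j=0}^{\tau-1}\tfrac{\lambda^j}{(j!)^\nu}$ and $d(0)=0$. The constant part contributes the same value $\tfrac{\alpha}{Z(\lambda,\nu)}$ to every $p(i)\alpha(i)$, so in the differences $p(i)\alpha(i)-p(i+1)\alpha(i+1)$ it is annihilated exactly; this is precisely the mechanism behind Theorem~\ref{theorem:cmp_sigma_zero}, where a constant $c$ forces $\Sigma_{p,\alpha}^\nabla=0$. Hence only the correction $d$ survives, and since $d$ is a partial sum of $\tfrac{\lambda^j}{(j!)^\nu}$, its first difference telescopes to a single term:
\begin{align}
d(i)-d(i+1) = \frac{K}{\alpha e^\lambda}\,\frac{\lambda^i}{(i!)^\nu}. \label{eq:plan_diff}
\end{align}
Combining \eqref{eq:plan_pa} and \eqref{eq:plan_diff} then yields $p(i)\alpha(i)-p(i+1)\alpha(i+1)=\tfrac{K}{e^\lambda}\,p(i)$, so that $\Sigma_{p,\alpha}^\nabla=\tfrac{K}{e^\lambda}\sum_{i=0}^\infty p(i)\nabla f(x_{t-i-1})$, i.e. the gradient weights have been flattened back into the $CMP$ probabilities $p(i)$.

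The final step is to recognize the remaining series as the expected previous displacement: the factor $\sum_i p(i)\nabla f(x_{t-i-1})=\mathbf{E}_\tau[\nabla f(x_{t-1-\tau})]$ is exactly the (normalized) expected stale-gradient contribution that constitutes $\mathbf{E}[x_t-x_{t-1}]$, so the sum reassembles into $K\,\mathbf{E}[x_t-x_{t-1}]$, as claimed, and plugging this into Lemma~\ref{lemma:async_update_sum} tunes the effective implicit momentum to $1+K$. I expect this last identification to be the main obstacle: making the proportionality constant come out as \emph{exactly} $K$ requires the $\tfrac{1}{\alpha e^\lambda}$ normalization of \eqref{eq:cmp_adap_factor} to cancel precisely against the step scale $\alpha$ and the $CMP$ normalizer $Z(\lambda,\nu)$ hidden inside $p(i)$. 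This is delicate because, unlike in the constant-step case, the adaptive weights $p(i)\alpha(i)$ no longer decay in $i$; I would therefore also need to justify the termwise rearrangement of this only conditionally convergent gradient series, invoking the same smoothness and boundedness assumptions used for the convergence analysis, before equating it with $\mathbf{E}[x_t-x_{t-1}]$.
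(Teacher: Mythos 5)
Your proposal follows essentially the same route as the paper's proof: substitute the CMP law \eqref{eq:cmp_distribution} and the step \eqref{eq:adap_alpha_cmp_sigma_tune} into the generic summand of \eqref{eq:sigma_term}, use the exact cancellation $\frac{\lambda^i}{(i!)^\nu}\cdot\lambda^{-i}(i!)^\nu=1$ so that $p(i)\alpha(i)=\frac{\alpha}{Z(\lambda,\nu)}c(i)$, telescope the partial-sum structure of $c$, and recognize the residual series as the expected stale gradient. The paper organizes this around the quantity $\Psi(i)=\alpha(i)-\lambda\frac{\alpha(i+1)}{(i+1)^\nu}$ and solves the recurrence $\Psi(i)=K$ for $c$, whereas you verify forward; these are the same computation.

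The ``main obstacle'' you flag at the end is real, and your own (correct) arithmetic makes it visible. First, with $c(\tau)$ as in \eqref{eq:cmp_adap_factor} you obtain $p(i)\alpha(i)-p(i+1)\alpha(i+1)=\frac{K}{e^{\lambda}}\,p(i)$, hence $\Sigma_{p,\alpha}^\nabla=\frac{K}{e^{\lambda}}\sum_i p(i)\nabla f(x_{t-i-1})$; the constant is $Ke^{-\lambda}$, not $K$. The paper's proof reaches $K$ only because its expression for $\Psi(i)$ carries an extra factor $e^{\lambda}$ that does not follow from \eqref{eq:adap_alpha_cmp_sigma_tune}; equivalently, the clean statement would require $c(\tau)=1-\frac{K}{\alpha}\sum_{j=0}^{\tau-1}\frac{\lambda^j}{(j!)^\nu}$ without the $e^{\lambda}$ (for Poisson, $Z(\lambda,1)=e^{\lambda}$, but for general $\nu$ not even replacing $e^\lambda$ by $Z(\lambda,\nu)$ repairs it). Second, the identification $\sum_i p(i)\nabla f(x_{t-i-1})=\mathbf{E}[\nabla f(v_{t-1})]=\mathbf{E}[x_t-x_{t-1}]$ is asserted, not derived, in the paper as well: the expected displacement is $-\mathbf{E}[\alpha(\tau_{t-1})\nabla f(v_{t-1})]$, which differs from the expected stale gradient by a sign and by the random step-size factor, so the two cannot be equated without further argument (or a redefinition of what ``momentum of magnitude $K$'' means here). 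So your proposal reproduces the paper's argument faithfully and honestly isolates exactly the two steps the paper glosses over, but it does not close them. Your separate worry about rearranging a conditionally convergent series is less serious: after differencing, the summands are proportional to $p(i)$ and the series is absolutely convergent under the bounded-gradient assumption, so no rearrangement is needed for $\Sigma_{p,\alpha}^\nabla$ itself. Your reading that $\Sigma_{p,\alpha}^\nabla=K\,\mathbf{E}[x_t-x_{t-1}]$ yields total implicit momentum $1+K$ via Lemma~\ref{lemma:async_update_sum} is consistent with the paper, as is your observation that Theorem~\ref{theorem:cmp_sigma_zero} is the special case where the correction term $d$ vanishes.
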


Theorem \ref{theorem:cmp_sigma_tune} shows how the series term $\Sigma_{p,\alpha}^\nabla$ can take the form of momentum of desired magnitude by using a particular $\tau$-adaptive step size. The main idea of the proof is the observation that the contributions of $\Sigma_{p,\alpha}^\nabla$ are simplified by the particular choice of the adaptive factor $c(\tau)$ (\ref{eq:adap_alpha_cmp_sigma_tune}), and the result follows from the definition of expectation. The $c(\tau)$ contains a sum that is $\mathcal{O}(\tau)$ in computation time. This indicates that such an adaptive step size function might not scale well, since $\tau$ is expected to be in the magnitude of $m$. In the following Corollary we show how this is resolved by the corresponding $\alpha(\tau)$ under the Poisson $\tau$-model.

\begin{corollary} \label{cor:Pois_sigma_tune}
    Assuming $\tau \in Pois(\lambda)$, the series term $\Sigma_{p,\alpha}^\nabla$ takes the form of implicit momentum of magnitude $K$ when using the adaptive step size function:
    \begin{align}
        \alpha(\tau) = \left( 1 - \frac{K}{\alpha} \frac{\Gamma(\tau,\lambda)}{\Gamma(\tau)} \right) \lambda^{-\tau} \tau! \alpha \label{eq:poisson_alpha}
    \end{align}
    where $\Gamma(\cdot)$ and $\Gamma(\cdot,\cdot)$ are the Gamma and Upper Incomplete Gamma function, respectively.
\end{corollary}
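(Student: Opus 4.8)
The plan is to obtain this statement as a direct specialization of Theorem \ref{theorem:cmp_sigma_tune} to the case $\nu = 1$, and then to recognize the resulting finite sum as a normalization of the upper incomplete Gamma function. Since $\text{CMP}(\lambda,1) = \text{Poi}(\lambda)$ by the reduction noted after (\ref{eq:cmp_distribution}), Theorem \ref{theorem:cmp_sigma_tune} already guarantees that the adaptive step size (\ref{eq:adap_alpha_cmp_sigma_tune})--(\ref{eq:cmp_adap_factor}) forces $\Sigma_{p,\alpha}^\nabla$ into the form of implicit momentum of magnitude $K$; the only thing left to establish is that setting $\nu=1$ in that step size yields precisely (\ref{eq:poisson_alpha}). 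So the convergence-type content is inherited for free, and the corollary is really an identity-rewriting claim whose payoff is computational.

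First I would substitute $\nu = 1$ into (\ref{eq:adap_alpha_cmp_sigma_tune}) and (\ref{eq:cmp_adap_factor}), which gives
\begin{align*}
    \alpha(\tau) = \left(1 - \frac{K}{\alpha e^\lambda} \sum_{j=0}^{\tau-1} \frac{\lambda^j}{j!}\right) \lambda^{-\tau}\, \tau!\, \alpha .
\end{align*}
Next I would invoke the standard closed form of the upper incomplete Gamma function at integer first argument,
\begin{align*}
    \Gamma(\tau, \lambda) = (\tau-1)!\, e^{-\lambda} \sum_{j=0}^{\tau-1} \frac{\lambda^j}{j!},
\end{align*}
together with $\Gamma(\tau) = (\tau-1)!$ for integer $\tau$ (the staleness is integer-valued). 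Dividing the two expressions cancels the factorial and yields $\Gamma(\tau,\lambda)/\Gamma(\tau) = e^{-\lambda} \sum_{j=0}^{\tau-1} \lambda^j/j!$, which is exactly the bracketed factor above. Substituting this equality then produces (\ref{eq:poisson_alpha}), completing the derivation.

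The only point requiring care is the integer-argument identity for $\Gamma(\tau,\lambda)$; I would justify it either by repeated integration by parts of $\int_\lambda^\infty t^{\tau-1} e^{-t}\, dt$, or by identifying $e^{-\lambda}\sum_{j=0}^{\tau-1} \lambda^j/j!$ as the complementary Poisson cumulative mass. Beyond this there is no genuine technical obstacle, and I do not expect any hidden difficulty. The value of the corollary is practical rather than analytical: the $\mathcal{O}(\tau)$ summation appearing in $c(\tau)$ — which, as remarked before the statement, scales poorly since $\tau$ is expected to be on the order of $m$ — is replaced by the regularized incomplete Gamma function $\Gamma(\tau,\lambda)/\Gamma(\tau)$, which standard numerical libraries evaluate in effectively constant time, thereby removing the scalability concern raised for the general CMP step size.
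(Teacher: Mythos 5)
Your proposal is correct and matches the paper's own proof: both specialize Theorem \ref{theorem:cmp_sigma_tune} to $\nu=1$ and rewrite the partial exponential sum via the integer-argument identity $\Gamma(\tau,\lambda)=(\tau-1)!\,e^{-\lambda}\sum_{j=0}^{\tau-1}\lambda^j/j!$ together with $\Gamma(\tau)=(\tau-1)!$. Your added justification of that identity (integration by parts, or the complementary Poisson CDF) is a detail the paper omits but is entirely standard.
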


Corollary \ref{cor:Pois_sigma_tune} shows how the series $\Sigma_{p,\alpha}^\nabla$ is in expectation replaced by momentum of any desired magnitude. Assuming Poisson $\tau$, the $\mathcal{O}(\tau)$ sum in (\ref{eq:cmp_adap_factor}) is replaced in (\ref{eq:poisson_alpha}) by the Gamma and Upper Incomplete Gamma function, for which there exist efficient ($\mathcal{O}(1)$) and accurate numerical approximation methods \cite{greengard2019algorithm}.

\section{Convex convergence analysis}\label{sec:convergence}

In this section we analyze the convergence time of \algname{}-type algorithms for convex and smooth optimization problems.
Here, too, due to space limitations, the proofs appear
%in~\cite{backstrom2019}, 
the appendix, 
while brief intuitive arguments are presented instead.

Consider the optimization problem (\ref{eq:min_f_problem}) where an acceptable solution $x$ satisfies $\epsilon$-convergence, defined as
\begin{align}
    \|x-x^*\|^2 \leq \epsilon
\end{align}

We assume that the problem is addressed using \algname{} under the system model described in Section \ref{sec:preliminaries}. Note that we consider a staleness-adaptive step size, hence $\alpha_t=\alpha(\tau_t)$ is stochastic.

For the analysis in this section, we consider strong convexity and smoothness, specified in \textit{Assumption \ref{assumption:function_smoothness}}. These analytical requirements are common in convergence analysis for convex problems \cite{de2015taming}\cite{sra2015adadelay}\cite{alistarh2018podc}\cite{alistarh2018nips}.

\begin{assumption} \label{assumption:function_smoothness}
    We assume that the objective function $f$ is, in expectation with respect to the stochastic gradients, strongly convex with parameter $c$ with L-Lipschitz continuous gradients and that the second momentum of the stochastic gradient is upper bounded.
        \begin{gather}
            \mathbf{E} \left[ (x-y)^T\big(\nabla f(x) - \nabla f(y)\big) \mid x,y \right] \geq c \| x - y \|^2 \label{condition:convexity} \\
            \mathbf{E} \left[ \|\nabla F(x) - \nabla F(y)\|] \mid x,y \right] \leq L\| x - y \| \label{condition:lipschitz} \\
            \mathbf{E} \left[ \|\nabla F(x)\|^2 \mid x \right] \leq M^2 \label{condition:moment}
        \end{gather}
\end{assumption}

The assumption (\ref{condition:convexity}) is standard in convex optimization and ensures that gradient-based methods will converge to a global optimum. Lipschitz continuity (\ref{condition:lipschitz}) is a type of strong continuity which bounds the rate with which the gradients can vary. Due to that $\mathbf{E}[\nabla F(x^*)]=0$, (\ref{condition:moment}) can be interpreted as bounding the variance of the gradient norm around the optimum $x^*$.

In addition to our system model in Section \ref{sec:preliminaries}, we make the following assumption on the staleness process:

\begin{assumption} \label{assumption:tau_mean_independence}
    The staleness process $(\tau_i)$ is non-anticipative, i.e. mean-independent of the outcome of future states of the algorithm (e.g. future delays and gradients). In particular, we have
    \begin{align*}
        \mathbf{E} [\tau_i \mid \tau_t] = \mathbf{E} [\tau_i] \text{ for all } i < t
    \end{align*}
\end{assumption}

Assumption \ref{assumption:tau_mean_independence} is reasonable considering that the staleness (i.e. scheduler's decisions) at time $i$ should not be considered to be influenced by staleness values $\tau_t$ of gradients yet to be computed.

Under Assumptions \ref{assumption:function_smoothness} and \ref{assumption:tau_mean_independence} above, we give a general bound on the number of iterations sufficient for expected $\epsilon$-convergence in the following theorem:

\begin{theorem} \label{theorem:convex_convergence}
    Consider the unconstrained convex optimization problem of (\ref{eq:min_f_problem}). Under Assumptions \ref{assumption:function_smoothness} and \ref{assumption:tau_mean_independence}, for any $\epsilon > 0$, there is a sufficiently large number $T$ of asynchronous SGD updates of the form (\ref{eq:HOGWILD}) such that
    \begin{align}
    \begin{split}
        T \leq \bigg( &2 \big( c - L M \epsilon^{-1/2} \mathbf{E}\left[ \tau \alpha \right] \big) \mathbf{E}\left[\alpha\right] - \\
        &\epsilon^{-1}M^2 \mathbf{E}\left[\alpha^2\right] \bigg)^{-1} \ln{(\|x_0-x^*\|^2\epsilon^{-1}) } \label{eq:convex_convergence}
    \end{split}{}
    \end{align}
    for which we have $\mathbf{E} [\|x_T - x^*\|^2] < \epsilon$
\end{theorem}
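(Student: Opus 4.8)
The plan is to derive a one-step contraction for the expected squared error $a_t := \mathbf{E}[\|x_t - x^*\|^2]$ and then unroll it geometrically. First I would expand the asynchronous update (\ref{eq:HOGWILD}), writing $\|x_{t+1}-x^*\|^2 = \|x_t - x^* - \alpha_t\nabla F(v_t)\|^2$ and taking expectations to get $a_{t+1} = a_t - 2\mathbf{E}[\alpha_t (x_t - x^*)^T\nabla F(v_t)] + \mathbf{E}[\alpha_t^2\|\nabla F(v_t)\|^2]$. The last (variance) term is handled directly: conditioning on the view $v_t$ and applying the moment bound (\ref{condition:moment}), together with the assumption that staleness and gradient noise are uncorrelated, gives $\mathbf{E}[\alpha_t^2\|\nabla F(v_t)\|^2] \le M^2\mathbf{E}[\alpha^2]$.

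The crux is lower-bounding the cross term, and here the key idea is to split the stale gradient as $\nabla F(v_t) = \nabla F(x_t) + (\nabla F(v_t) - \nabla F(x_t))$. For the first part, unbiasedness $\mathbf{E}[\nabla F(x_t)\mid x_t] = \nabla f(x_t)$ and strong convexity (\ref{condition:convexity}) taken at $y = x^*$ (with $\nabla f(x^*)=0$) yield $\mathbf{E}[\alpha_t(x_t-x^*)^T\nabla F(x_t)] \ge c\,\mathbf{E}[\alpha]\,a_t$, where the step-size expectation factors out by Assumption \ref{assumption:tau_mean_independence} and the uncorrelatedness of staleness and gradient noise. For the second (staleness) part, Cauchy--Schwarz and the Lipschitz bound (\ref{condition:lipschitz}) give a lower bound of the form $-L\,\mathbf{E}[\alpha_t\|x_t-x^*\|\,\|x_t-v_t\|]$; then I would write the accumulated drift as a telescoping sum $x_t - v_t = -\sum_{k=t-\tau_t}^{t-1}\alpha_k\nabla F(v_k)$, so that $\|x_t - v_t\| \le \sum_k \alpha_k\|\nabla F(v_k)\|$. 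Taking expectations, bounding each gradient norm by $M$ through (\ref{condition:moment}), and counting the $\tau_t$ summands produces the product structure $\mathbf{E}[\tau\alpha]\,\mathbf{E}[\alpha]$ (the front factor $\alpha_t\tau_t$ contributing $\mathbf{E}[\tau\alpha]$ and the averaged inner step contributing $\mathbf{E}[\alpha]$); with $\mathbf{E}[\|x_t-x^*\|]\le\sqrt{a_t}$ by Jensen this bounds the staleness penalty by $LM\,\mathbf{E}[\tau\alpha]\,\mathbf{E}[\alpha]\,\sqrt{a_t}$.

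Combining the pieces gives $a_{t+1} \le a_t - 2c\,\mathbf{E}[\alpha]\,a_t + 2LM\,\mathbf{E}[\tau\alpha]\mathbf{E}[\alpha]\sqrt{a_t} + M^2\mathbf{E}[\alpha^2]$. To convert the two additive error terms into multiplicative ones I would exploit that, up until the first time convergence is reached, $a_t > \epsilon$; hence $\sqrt{a_t} = a_t/\sqrt{a_t} < \epsilon^{-1/2}a_t$ and $1 < \epsilon^{-1}a_t$. Substituting yields $a_{t+1} < (1-\rho)a_t$ with $\rho = 2(c - LM\epsilon^{-1/2}\mathbf{E}[\tau\alpha])\mathbf{E}[\alpha] - \epsilon^{-1}M^2\mathbf{E}[\alpha^2]$, exactly the bracketed factor in (\ref{eq:convex_convergence}) whose reciprocal forms the stated prefactor. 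Unrolling gives $a_T < (1-\rho)^T\|x_0-x^*\|^2$, so $\epsilon$-convergence holds once $(1-\rho)^T\|x_0-x^*\|^2 \le \epsilon$; taking logarithms and using $\ln(1-\rho)\le-\rho$ produces the claimed sufficient bound $T \le \rho^{-1}\ln(\|x_0-x^*\|^2\epsilon^{-1})$.

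The main obstacle I anticipate is the careful bookkeeping of expectations in the staleness term: since the step size $\alpha_t=\alpha(\tau_t)$ is correlated with $\tau_t$ and hence with the number of terms in the telescoping drift sum, the factorization into $\mathbf{E}[\tau\alpha]\,\mathbf{E}[\alpha]$ is not immediate. This is precisely where the non-anticipativity of Assumption \ref{assumption:tau_mean_independence} (mean-independence of past staleness, and thus of past step sizes $\alpha_k$ with $k<t$, from $\tau_t$) and the uncorrelatedness of staleness and gradient noise are needed to separate $\mathbf{E}[\alpha_t\tau_t]$ from the averaged inner step size $\mathbf{E}[\alpha]$. A secondary subtlety is ensuring $\rho>0$, which implicitly constrains the admissible step-size magnitude and ties back to the staleness-adaptive design of Section~\ref{sec:method}.
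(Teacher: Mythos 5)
Your proposal follows essentially the same route as the paper's proof: the same expansion of $\|x_{t+1}-x^*\|^2$, the same add-and-subtract of $\nabla F(x_t)$ to separate the strong-convexity term from the staleness penalty, the same telescoping of $x_t - v_t$ into $\tau_t$ increments with non-anticipativity used to factor $\mathbf{E}[\tau\alpha]\,\mathbf{E}[\alpha]$, the same use of $a_t>\epsilon$ to turn the additive terms multiplicative, and the same geometric unrolling with $\ln(1-\rho)\leq-\rho$. The only cosmetic difference is that you bound $\mathbf{E}[\|x_t-x^*\|]$ by Jensen and the gradient norm by $M$ separately, where the paper applies H\"older's inequality to the product; the argument is otherwise the same and correct.
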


The main idea in the proof of Theorem \ref{theorem:convex_convergence} is to bound $|| x_{t+1} - x^* || / || x_{t} - x^* ||$, which quantifies the improvement of each SGD step. The statement then follows from a recursive argument.

\begin{corollary} \label{corollary:constant_alpha_convergence}
    Consider the optimization problem and the conditions as in Theorem~\ref{theorem:convex_convergence}. There exists a choice of a step size $\alpha$ such that the convergence time $T$ is in the magnitude of $\mathcal{O}\left(\mathbf{E}\left[\tau \right]\right)$ (remember $\mathbf{E}\left[\tau \right]$ is denoted by $\bar\tau$). In particular, letting $\alpha$ be
    \begin{align}
        \alpha = \theta \frac{c \epsilon M^{-1}}{M + 2 L \sqrt{\epsilon} \bar\tau} \label{eq:alpha_choice}
    \end{align}
    for a tunable factor $\theta \in (0,2)$, there exists a $T$ such that
    \begin{align}
        T \leq \frac{M + 2 L \sqrt{\epsilon} \bar\tau}{\theta(2-\theta) c^2 M^{-1} \epsilon} \ln(\epsilon^{-1} \|x_0-x^*\|^2) \label{eq:theta_convergence}
    \end{align}
\end{corollary}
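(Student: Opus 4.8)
The plan is to specialize the general bound of Theorem~\ref{theorem:convex_convergence} to the case of a constant (deterministic) step size and then optimize the resulting expression over the scale of $\alpha$. Since the prescribed $\alpha$ in (\ref{eq:alpha_choice}) does not depend on $\tau$, it is deterministic, so the three expectations appearing in (\ref{eq:convex_convergence}) collapse: $\mathbf{E}[\alpha]=\alpha$, $\mathbf{E}[\alpha^2]=\alpha^2$, and $\mathbf{E}[\tau\alpha]=\alpha\,\bar\tau$ (no independence assumption is needed here, precisely because $\alpha$ is constant). Substituting these into the inverse factor of (\ref{eq:convex_convergence}) turns the denominator into a quadratic in $\alpha$,
\begin{align*}
    D(\alpha) = 2c\alpha - \big( 2LM\epsilon^{-1/2}\bar\tau + \epsilon^{-1}M^2 \big)\alpha^2 = 2c\alpha - B\alpha^2,
\end{align*}
where I collect the bracketed coefficient as $B = M\epsilon^{-1}\big(M + 2L\sqrt{\epsilon}\,\bar\tau\big)$, and the bound reads $T \le D(\alpha)^{-1}\ln(\|x_0-x^*\|^2\epsilon^{-1})$.

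Next I would observe that $D$ is a downward-opening parabola in $\alpha$, maximized at $\alpha^\star = c/B$. A short computation shows $c/B = c\epsilon M^{-1}/(M + 2L\sqrt{\epsilon}\,\bar\tau)$, which is exactly the factor multiplying $\theta$ in the prescribed step size (\ref{eq:alpha_choice}); in other words the chosen step is $\alpha = \theta\,\alpha^\star$. Evaluating the parabola at this point gives $D(\theta\alpha^\star) = 2c\theta\alpha^\star - B\theta^2(\alpha^\star)^2 = \theta(2-\theta)\,c^2/B$, using $\alpha^\star = c/B$. Inverting and re-expanding $B$ then yields precisely (\ref{eq:theta_convergence}), after noting that dividing by $c^2M^{-1}\epsilon$ is the same as multiplying by $M/(c^2\epsilon)$.

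Finally, the $\mathcal{O}(\bar\tau)$ claim follows by inspection of (\ref{eq:theta_convergence}): the numerator $M + 2L\sqrt{\epsilon}\,\bar\tau$ is affine in $\bar\tau$ while every other factor is independent of $\bar\tau$, so $T$ grows at most linearly with $\bar\tau$. The one point requiring care is positivity of the denominator: the bound is meaningful only when $D(\alpha) > 0$, and since $D(\theta\alpha^\star) = \theta(2-\theta)c^2/B$ with $B>0$, the stated restriction $\theta\in(0,2)$ is exactly what guarantees $\theta(2-\theta)>0$, hence a strictly positive denominator and a finite, non-vacuous convergence-time bound. This positivity check---rather than any heavy computation---is the main thing to get right, as it is what pins down the admissible range of the tuning factor $\theta$ and confirms that the chosen $\alpha$ stays within the regime where Theorem~\ref{theorem:convex_convergence} delivers a useful guarantee.
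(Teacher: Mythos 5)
Your proposal is correct and follows essentially the same route as the paper's proof: both specialize the denominator of (\ref{eq:convex_convergence}) to a constant $\alpha$, recognize it as the downward-opening quadratic $c\rho^{-1}\alpha(2\rho-\alpha)$ with vertex at $\rho = c\epsilon M^{-1}/(M+2L\sqrt{\epsilon}\,\bar\tau)$, and evaluate at $\alpha=\theta\rho$ to obtain the $\theta(2-\theta)c\rho$ improvement factor. Your explicit remark that $\theta\in(0,2)$ is precisely the positivity condition for the denominator is a welcome addition the paper only states implicitly via ``$\delta>0$ when $0<\alpha<2\rho$.''
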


The results in Theorem \ref{theorem:convex_convergence} and Corollary \ref{corollary:constant_alpha_convergence} are related to the results presented in \cite{de2015taming} and \cite{alistarh2018podc}. The main differences are that in our analysis we tighten the bound with a factor $(2-\theta)^{-1}$, expand the allowed step size interval, as well as relax the \textit{maximum staleness} assumption and reduce the magnitude of the bound from linear in the \textit{maximum} staleness $\mathcal{O}(\hat\tau)$ to the \emph{expected} $\mathcal{O}(\bar\tau)$.

In the following corollary, we give a general bound assuming \textit{any} non-increasing step size function $\alpha(\tau)$.

\begin{corollary} \label{corollary:decaying_alpha_convergence}
    Under the same conditions as Theorem \ref{theorem:convex_convergence}, let $\alpha_t = \alpha(\tau_t)$ be a non-increasing function of $\tau_t$. Then we have the following bound on the expected number of iterations until convergence:
    \begin{align}
    \begin{split}
        T \leq &\big( 2 c \mathbf{E}\left[\alpha \right] -\epsilon^{-1}M \left( M + 2L\sqrt{\epsilon} \bar\tau \right) \mathbf{E}\left[\alpha^2\right] \big)^{-1} \\
        &\cdot \ln(\epsilon^{-1} \|x_0-x^*\|^2) \label{eq:decaying_alpha_convergence}
    \end{split}
    \end{align}
\end{corollary}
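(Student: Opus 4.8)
The plan is to \emph{derive this corollary directly from Theorem~\ref{theorem:convex_convergence}}, treating it as a clean relaxation of that bound that holds precisely because $\alpha(\tau)$ is non-increasing. Write $D_{\mathrm{thm}}$ for the denominator in (\ref{eq:convex_convergence}) and $D_{\mathrm{cor}}$ for the denominator in (\ref{eq:decaying_alpha_convergence}). Expanding both (and using $\epsilon^{-1}\sqrt{\epsilon}=\epsilon^{-1/2}$) shows that they share the terms $2c\,\mathbf{E}[\alpha]$ and $-\epsilon^{-1}M^2\mathbf{E}[\alpha^2]$, and differ only in a single cross term: $D_{\mathrm{thm}}$ subtracts $2LM\epsilon^{-1/2}\,\mathbf{E}[\tau\alpha]\,\mathbf{E}[\alpha]$, whereas $D_{\mathrm{cor}}$ subtracts $2LM\epsilon^{-1/2}\,\bar\tau\,\mathbf{E}[\alpha^2]$. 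Hence the whole corollary reduces to the single scalar inequality
\[
\mathbf{E}[\tau\alpha]\,\mathbf{E}[\alpha] \;\le\; \bar\tau\,\mathbf{E}[\alpha^2],
\]
which forces $D_{\mathrm{cor}}\le D_{\mathrm{thm}}$, hence $D_{\mathrm{thm}}^{-1}\le D_{\mathrm{cor}}^{-1}$ in the regime where both are positive, so the (larger) right-hand side of (\ref{eq:decaying_alpha_convergence}) dominates the already-sufficient iteration count supplied by the theorem.

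The core work is to prove that target inequality, and I would split it into two factors. First I would exploit that $\alpha(\cdot)$ is non-increasing to obtain the negative-correlation bound $\mathbf{E}[\tau\alpha]\le\bar\tau\,\mathbf{E}[\alpha]$. The cleanest route is a coupling argument: taking an independent copy $\tau'$ of $\tau$, the product $(\tau-\tau')\big(\alpha(\tau)-\alpha(\tau')\big)\le 0$ holds pointwise since $\tau\mapsto\alpha(\tau)$ reverses order; taking expectations over the i.i.d.\ pair and expanding gives $2\mathbf{E}[\tau\alpha]-2\mathbf{E}[\tau]\mathbf{E}[\alpha]\le 0$, i.e.\ $\mathrm{Cov}(\tau,\alpha)\le 0$. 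Second, I would apply Jensen's inequality (equivalently, non-negativity of the variance of $\alpha$) to get $\mathbf{E}[\alpha]^2\le\mathbf{E}[\alpha^2]$. Chaining the two estimates yields
\[
\mathbf{E}[\tau\alpha]\,\mathbf{E}[\alpha] \;\le\; \bar\tau\,\mathbf{E}[\alpha]^2 \;\le\; \bar\tau\,\mathbf{E}[\alpha^2],
\]
which is exactly what is required.

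Finally I would assemble the pieces: substituting the relaxed cross term into $D_{\mathrm{thm}}$ produces precisely $D_{\mathrm{cor}}$, and since the common logarithmic factor $\ln(\epsilon^{-1}\|x_0-x^*\|^2)$ is non-negative, the inequality on denominators transfers to the final iteration bound. The step I expect to demand the most care is the monotonicity/correlation argument: one must make explicit that $\alpha=\alpha(\tau)$ is a deterministic non-increasing function of the \emph{single} random variable $\tau$, so that the coupling sign argument is valid, and one must carry along the standing assumption that $D_{\mathrm{cor}}>0$ so that inverting the denominators preserves the direction of the inequality. Everything else is algebraic bookkeeping that matches the two expanded denominators term by term.
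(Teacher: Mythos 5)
Your proposal is correct and follows essentially the same route as the paper: both reduce the corollary to the covariance inequality $\mathbf{E}[\tau\alpha]\le\bar\tau\,\mathbf{E}[\alpha]$ for a non-increasing $\alpha(\cdot)$ (the paper proves it via the centering identity $\mathbf{E}[\tau\alpha]-\bar\tau\mathbf{E}[\alpha]=\mathbf{E}[(\tau-\bar\tau)(\alpha(\tau)-\alpha(\bar\tau))]\le 0$, you via the equivalent independent-copy coupling) and then substitute into the denominator of (\ref{eq:convex_convergence}). If anything, your write-up is slightly more complete, since you make explicit the Jensen step $\mathbf{E}[\alpha]^2\le\mathbf{E}[\alpha^2]$ and the positivity of the denominator, both of which the paper leaves implicit in the phrase ``rewrites to.''
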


Corollary \ref{corollary:decaying_alpha_convergence} describes a general convergence bound for any step size function $\alpha(\tau)$ which decays in $\tau$. We see that such step size functions also achieve the asymptotic $\mathcal{O}(\bar\tau^{-1})$ bound, similar to the one for a constant $\alpha$ (\ref{eq:theta_convergence}).

\remove{

In the following corollary, we see how a simple $\times \tau^{-1}$ adaptive step size can recover the convergence (\ref{eq:theta_convergence}), however this requires scaling up $\alpha$ with $\bar\tau$. \kbcom{Maybe even skip this cor? One of the related works suggests the 1/tau adaptive alpha for the softsync SGD protocol. One good point with this cor is that we see that to recover the same convergence bound, we need to scale up all the alphas with $\bar\tau$, giving rise to convergence problems for small tau. But since the adaptive alpha we propose is different from this, it could be confusing.}

\begin{corollary} \label{corollary:tau_adapt_concergence}
    Under the same conditions as Theorem \ref{theorem:convex_convergence}, using the staleness-adaptive step size
    \begin{align}
        \alpha = \frac{1}{\tau_t} \theta \frac{c \epsilon M^{-1}}{M + 2 L \sqrt{\epsilon} \bar\tau} \bar\tau =: \frac{1}{\tau_t} \theta \nu \bar\tau \label{eq:alpha_choice_linear_adaptive}
    \end{align}
    where
    \begin{align}
        0 < \theta < 2 \cdot \min \bigg( 1, \frac{M/\bar\tau + 2L\sqrt{\epsilon}}{3M + L\sqrt{\epsilon}} \bigg) \label{eq:theta_restriction}
    \end{align}
    
    Then, convergence is expected within
    \begin{align}
        T \leq \frac{M + 2 L \sqrt{\epsilon} \bar\tau}{\theta(2-\theta) c^2 M^{-1} \epsilon} \ln(\epsilon^{-1} \|x_0-x^*\|^2) \label{eq:theta_convergence_linear_adaptive}
    \end{align}
\end{corollary}

}

\section{Experimental study}\label{sec:evaluation}

In this section we aim to evaluate the results derived in section \ref{sec:method} in a practical setting. This is achieved by (i)~measuring the accuracy and scalability of the proposed $\tau$-models (ii)~evaluating the convergence properties of \algname{} with an adaptive step size function derived under the CMP/Poisson $\tau$ models.

\begin{figure}
  \centering
  \includegraphics[width=0.9\linewidth]{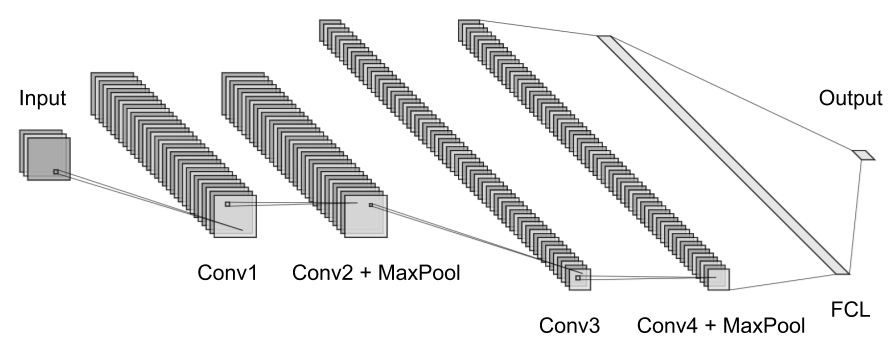}
  \caption{CNN architecture; Four convolutional layers with $3 \times 3$ kernels, with intermediate MaxPool layers. First two convolutions have 32 filters, the last two 64. The architecture has two fully connected layers, one with $256$ neurons, and the output layer with $10$ neurons.}
  \label{fig:CNNarch}
\end{figure}

\textbf{Setup.} We apply \algname{} for training a 4-layer Convolutional Neural Network (CNN) architecture (see Fig. \ref{fig:CNNarch}) on the common image classification benchmark dataset CIFAR10~\cite{krizhevsky2009learning}. The performance of the CNN is measured as the \textit{cross entropy} between the true and the predicted class distribution. The algorithm is evaluated on a setup with a 36-thread Intel Xeon CPU and 64GB memory. The implementation is in Python 2.7 and uses the standard Python multiprocessor library as well as TensorFlow \cite{abadi2016tensorflow} for gradient computation.

\textbf{CMP/Poisson $\boldsymbol\tau$.} We evaluate the $\tau$ models (Poisson, CMP) proposed in section \ref{sec:method} by comparing with the $\tau$ distribution observed in practice
for different number of workers. We compare our proposed $\tau$ models with distributions proposed in other works, namely the geometric $\tau$ model \cite{mitliagkas2016asynchrony} and the bounded uniform $\tau$ model~\cite{sra2015adadelay}.

The distribution parameters in Table~\ref{tab:dist_parameter_search} are found through an exhaustive search where we aim to minimize the Bhattacharyya distance to the $\tau$ distribution observed in practice.
Note that: 
(i) For the Poisson $\tau$ model, as hypothesized in Section \ref{sec:method}, the distribution parameter $\lambda$ indeed corresponds well to the number of worker nodes. From Fig.~\ref{fig:model_dist_fit} 
we see that the proposed CMP and Poisson $\tau$ models by far outperforms the geometrical and uniform $\tau$ models, in particular for larger number of workers. 
(ii)~As mentioned in Footnote~\ref{fn:decaying_p0},
we confirm in Table~\ref{tab:dist_parameter_search} that $\textbf{P}[\tau=0]$, i.e. $p$, decays as $m$ increases. 
(iii)~We see in Fig.~\ref{fig:model_dist_fit} that the CMP $\tau$ model outperforms the others in terms of accuracy and scalability. The CMP distribution parameter $\nu$ is found through a 1-d search, and using the assumption (\ref{eq:lambda_nu_m}) the other parameter $\lambda$ is calculated. The result in Fig. \ref{fig:model_dist_fit} therefore validates the assumption (\ref{eq:lambda_nu_m}).

\begin{figure}
  \includegraphics[width=0.9\linewidth]{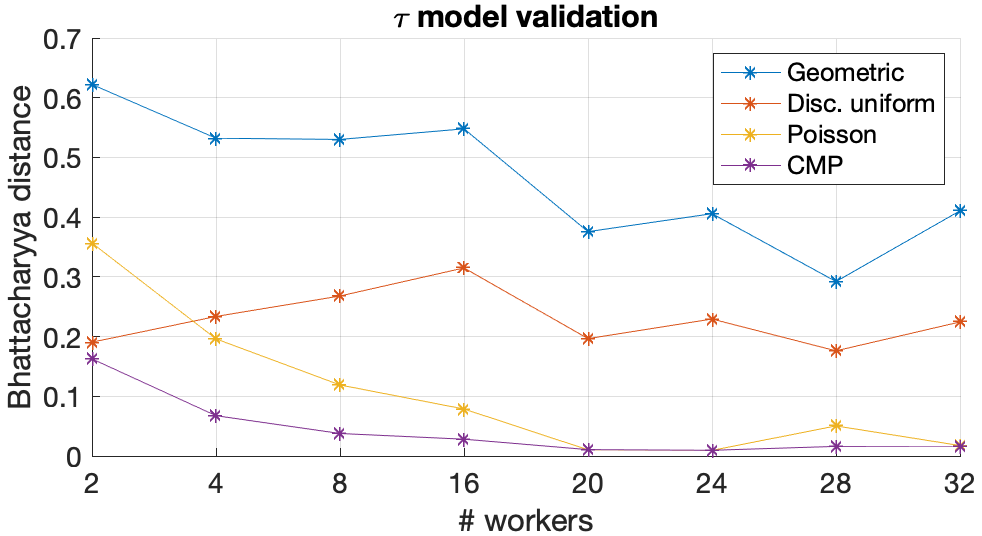}
  \caption{Bhattacharyya distance of different $\tau$ models compared to the observed distribution. The graph shows that the CMP $\tau$ model is the most accurate in all tests, with the Poisson $\tau$ model as a close second. The uniform and geometric $\tau$ models are persistently less accurate, and show poor scalability in comparison.}
  \label{fig:model_dist_fit}
\end{figure}

\begin{table}
    \centering
    \resizebox{\columnwidth}{!}{
    \begin{tabular}{ |c||c|c|c|c|c|c|c|c| }
        \hline
        $\tau$ model & 2 & 4 & 8 & 16 & 20 & 24 & 28 & 32 \\
        \hline
        $p$ (Geom) & 0.34 & 0.21 & 0.12 & 0.06 & 0.05 & 0.04 & 0.04 & 0.03 \\
        $\hat\tau$ (Unif) & 2 & 5 & 11 & 22 & 31 & 37 & 48 & 48 \\
        $\lambda$ (Pois) & 2.0 & 4.0 & 8.0 & 16.0 & 19.7 & 23.8 & 26.5 & 32 \\
        $\nu$ (CMP) & 6.28 & 5.26 & 4.18 & 3.48 & 0.93 & 0.95 & 0.39 & 0.87 \\
        \hline
    \end{tabular}
    }
    \caption{Optimal distribution parameters for different number of workers}
    \label{tab:dist_parameter_search}
\end{table}

\textbf{Convergence with $\boldsymbol\tau$-adaptive $\boldsymbol\alpha$.} We evaluate \algname{} compared with standard \asyncsgd{} by measuring the number of \emph{epochs} required until a certain error threshold is reached, epochs being the number of passes through the dataset. The number of SGD iterations in one epoch is $\lceil |D|/b \rceil$ where $|D|$ is the size of the dataset and $b$ the batch size. In our experiments we have $\lceil |D|/b \rceil = 469$.
We consider performance in terms of \textit{statistical efficiency}, i.e. the statistical benefit of each SGD step. In practice, the approach can be applied to any orthogonal work focusing on computational efficiency, such as efficient parameter server architectures \cite{ho2013more}\cite{cui2016geeps} and efficient gradient communication and quantization \cite{alistarh2017qsgd}\cite{wen2017terngrad}.

We compare standard \asyncsgd{} with constant step size $\alpha_c = 0.01$ to \algname{} with an adaptive step size function according to (\ref{eq:poisson_alpha}) with $\alpha=\alpha_c$, $K=1$, and $\lambda = m$. In addition, we bound the step size $\alpha(\tau) \leq 5 \cdot \alpha_c$ to mitigate issues with numerical instability in the SGD algorithms, and (very infrequent) gradients with $\tau>150$ are not applied.

In principle, given a sufficiently small $\alpha_c$, speedup can always be achieved by using an adaptive step size strategy $\alpha(\tau)$ which overall increases the average step size. To ensure a fair comparison, the adaptive step size function $\alpha(\tau)$ is normalized so that
\begin{align}
    \textbf{E}_\tau[\alpha(\tau)] = \alpha_c \label{eq:scale_exp_alpha}
\end{align}
where the expectation is taken over the real $\tau$ distribution observed in the system. Enforcing (\ref{eq:scale_exp_alpha}) ensures that any potential speedup is achieved due to how the step size function $\alpha(\tau)$ adaptively changes the impact of gradients depending on their staleness, and not because of the overall magnitude of the step size.

Fig. \ref{fig:convergence_comparison_split} shows how \algname{} exhibits persistently faster convergence for different number of workers. For many workers ($m=28,32$) \algname{} requires significantly fewer epochs compared to standard \asyncsgd{} to achieve sufficient performance. Observe that for $m=32$ the average speedup is $\times 1.5$ while the worst-case is $\times 1.7$.

\begin{figure}
  \centering
  \includegraphics[width=0.9\linewidth]{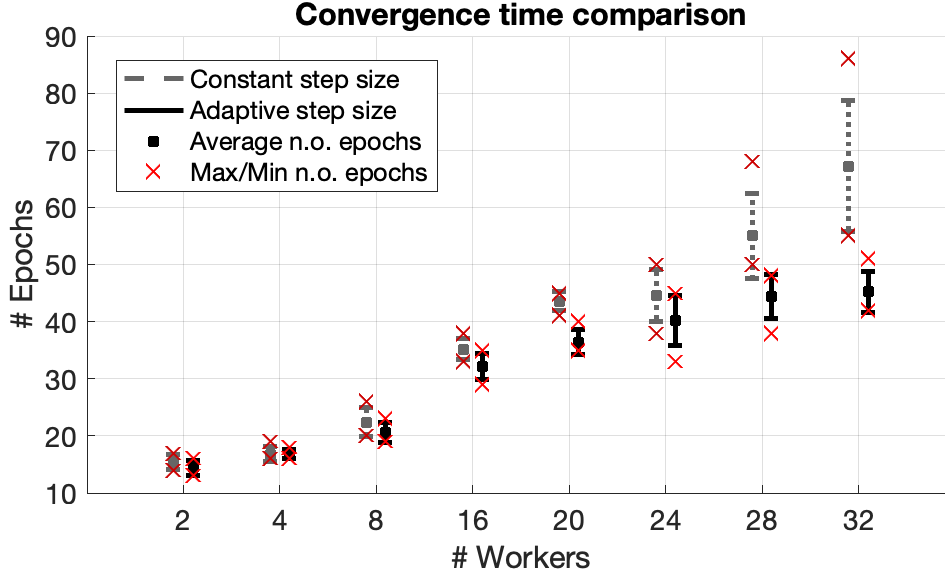}
  \caption{\asyncsgd{} vs. \algname{} comparison. The plot shows the n.o. epochs required until sufficient performance (cross-entropy loss $\leq 0.05$). The statistics are computed based on 5 runs, and the bar height corresponds to the standard deviation.}
  \label{fig:convergence_comparison_split}
\end{figure}
\section{Related work}\label{sec:relatedwork}

Orthogonal to this work, there are numerous works dedicated to optimizing the effectiveness of SGD by utilizing data sparsity, topology of the search space, and other properties of the problems. One example is introducing momentum to the updates, originally proposed in \cite{polyak1964some}, however not in the context of SGD. Apart from this, there are several variation of SGD in the sequential case introducing adaptivness to aspects of the problem topology, such as Adagrad, Adadelta, RMSprop, Adam, AdaMax, and Nadam (cf.~\cite{ruder2016overview} and references therein).

In \cite{mitliagkas2016asynchrony} Mitliagkas et al. show that under certain stochastic delay models, asynchrony has an effect on convergence similar to momentum, referred to as asynchrony-induced or implicit momentum, where more workers imply a larger magnitude of the effect. In \cite{liu2018towards} these similarities are investigated further, and it is shown that \asyncsgd{} and momentum shows different convergence rates in general and that \asyncsgd{} is in fact faster in expectation. Since it has been seen \cite{sutskever2013importance} that the magnitude of momentum can have significant impact on convergence, the result by Mitliagkas et al. would imply a harsh scalability limitation of AsyncPSGD. In this paper, we show that under the same $\tau$ model as in \cite{mitliagkas2016asynchrony}, \algname{} can in theory mitigate this issue, and even allow the expected asynchrony-induced momentum to be tuned implicitly by the rate of adaptation. In addition, in this work we propose a different class of $\tau$ distribution models, and show how they better capture the real $\tau$ values observed in a deep learning application. From our proposed models we derive an adaptive step size function $\alpha(\tau)$ which we show significantly reduces the number of SGD steps required for convergence. \looseness=-1

Below we give a brief overview of works on synchronous distributed SGD. Under smoothness and convexity assumptions, in \cite{zinkevich2010parallelized} and \cite{agarwal2010distributed}, synchronous distributed SGD with data-parallelism was observed and proven to accelerate convergence. This was implemented on a larger scale by Dekel et al. \cite{dekel2011optimal} where the convergence rates were improved under stronger analytical assumptions.
In \cite{ho2013more} and \cite{lee2014model} the synchronization is relaxed using a Stale Synchronous Parameter Server with a tunable staleness threshold in order to reduce the waiting-time, which is shown to outperform synchronous SGD. In \cite{gupta2016model} Gupta et al. give a rigorous empirical investigation of practical trade-offs the number of workers, mini-batch size and staleness; the results provide useful insights in scalability limitations in synchronous methods with averaging. We address this issue in this paper from a theoretical standpoint and explain the results observed in practice. This is discussed in detail in Section \ref{sec:ltd_scal}.

The study of numerical methods under parallelism is not new, and sparked due to the works by Bertsekas and Tsitsiklis \cite{bertsekas1989parallel} in 1989. Recent works \cite{chaturapruek2015asynchronous}\cite{lian2015asynchronous} show under various analytical assumptions that the convergence of Async-PSGD is not significantly affected by asynchrony and that the noise introduced by delays is asymptotically negligible compared to the noise from the stochastic gradients.
This is confirmed in \cite{chaturapruek2015asynchronous} for convex problems (linear and logistic regression) for a small number of cores.
In \cite{lian2015asynchronous} Lian et al. relax the theoretical assumptions and establish convergence rates for non-convex minimization problems, assuming bounded gradient delays and number of workers.
Lock-free Async-PSGD in shared-memory, i.e. \textsc{Hogwild}!, was proposed by Niu et al. \cite{recht2011hogwild} and was shown to achieve near-optimal convergence rates assuming sparse gradients. Properties of Async-PSGD with sparse or component-wise updates have since been rigourously studied in recent literature due to the performance benefits of lock-freedom \cite{sallinen2016high}\cite{nguyen2018sgd}\cite{de2015taming}. The gradient sparsity assumption was relaxed in the recent work \cite{alistarh2018podc} which magnified the convergence time bound in the order of magnitude $\sim\sqrt{d}$, $d$ being the problem dimensionality.

Delayed optimization in completely asynchronous first-order optimization algorithms was analyzed initially in~\cite{agarwal2011distributed}, where Agarwal et al. introduce step sizes which diminish over the progression of SGD, depending on the maximum staleness allowed in the system, but not adaptive to the actual delays observed.
In comparison, in this work we relax the maximum staleness restriction and derive a strategy for adapting the step size depending on the actual staleness values observed in the system in an online fashion. 
Adaptiveness to delayed updates during execution was proposed and analyzed in \cite{mcmahan2014delay} under assumptions of gradient sparsity and \textit{read} and \textit{write} operations having the same relative ordering. A similar approach was used in \cite{Zhang:2016:SAD:3060832.3060950}, however for synchronous SGD with the \textit{softsync} protocol. In \cite{Zhang:2016:SAD:3060832.3060950} statistical speedup is observed in some cases for a limited number of worker nodes, however by using \textit{momentum SGD}, which is not the case in their theoretical analysis, and step size decaying schedules on top of the staleness-adaptive step size. \looseness=-1

The work closest to ours is AdaDelay \cite{sra2015adadelay} which addresses a particular constrained convex optimization problem, namely training a logistic classifier with projected gradient descent. It utilizes a network of worker nodes computing gradients in parallel which are aggregated at a central parameter server with a step size that is scaled proportionally to $\tau^{-1}$. The staleness model in \cite{sra2015adadelay} is a uniform stochastic distribution, which implies a strict upper bound on the delays, making the system partially asynchronous. In comparison, in this work we  analyze the convergence of \algname{} for non-convex optimization, relax the bounded gradient staleness assumption, as well as evaluate more  delay models both theoretically and empirically. Moreover, we validate our findings experimentally by training a Deep Neural Network (DNN) classifier using real-world dataset, which constitutes a highly non-convex and high-dimensional optimization problem. In addition, we provide convergence analysis in the convex case for \algname{}, where we show explicitly a probabilistic time bound for $\epsilon$-convergence, for any step size function decaying in the staleness $\tau$.

\section{Conclusions and Future Work}\label{sec:conclusions}

In this paper, we first analytically confirm scalability limitations of the standard \syncsgd{}, which were observed empirically in other works; we thus motivate the need to further investigate asynchronous approaches. We propose a new class of $\tau$-distribution models, show analytically how the parameters can be efficiently chosen in a practical setting, and validate the models empirically, as well as compare to models proposed in other works.

We derive and analyse adaptive step size strategies which reduce the impact of asynchrony and stale gradients, using our framework \algname{}. We show that the proposed strategies enable turning asynchrony into implicit asynchrony-induced momentum of desired magnitude. We provide convergence bounds for a wide class of $\tau$-adaptive step size strategies for convex target functions. We validate our findings empirically for a deep learning application and show that \algname{} with our proposed step size strategy converges significantly faster compared to standard \asyncsgd{}. \looseness=-1

The concept of staleness-adaptive \asyncsgd{} has been under-explored, despite that, as shown here, it significantly improves scalability and helps maintain statistical efficiency. Continuing to investigate 
asynchrony-aware SGD, is therefore of interest. Future research directions also include further studying the nature of the staleness, i.e. effect of schedulers and synchronization methods, for understanding the impact of asynchrony and for choosing appropriate adaptive strategies.

\small
\section{Acknowledgements}
This work was partially supported by the Wallenberg AI, Autonomous Systems and Software Program (WASP), Knut and Alice Wallenberg Foundation,  
the SSF  proj. ``FiC'' nr. {GMT14-0032} and the Chalmers Energy AoA framework proj. INDEED. \looseness=-1

\begin{small}
\bibliographystyle{abbrv}
\bibliography{ref}  % sigproc.bib is the name of the Bibliography in this case

\begin{thebibliography}{10}

\bibitem{abadi2016tensorflow}
M.~Abadi, P.~Barham, J.~Chen, Z.~Chen, A.~Davis, J.~Dean, M.~Devin,
  S.~Ghemawat, G.~Irving, M.~Isard, et~al.
\newblock Tensorflow: a system for large-scale machine learning.
\newblock In {\em OSDI}, volume~16, pages 265--283, 2016.

\bibitem{agarwal2011distributed}
A.~Agarwal and J.~C. Duchi.
\newblock Distributed delayed stochastic optimization.
\newblock In {\em Advances in Neural Information Processing Systems}, pages
  873--881, 2011.

\bibitem{agarwal2010distributed}
A.~Agarwal, M.~J. Wainwright, and J.~C. Duchi.
\newblock Distributed dual averaging in networks.
\newblock In {\em Advances in Neural Information Processing Systems}, pages
  550--558, 2010.

\bibitem{alistarh2018podc}
D.~Alistarh, C.~De~Sa, and N.~Konstantinov.
\newblock The convergence of stochastic gradient descent in asynchronous shared
  memory.
\newblock In {\em ACM Symp. on Principles of Distributed Computing}, PODC '18,
  pages 169--178. ACM, 2018.

\bibitem{alistarh2017qsgd}
D.~Alistarh, D.~Grubic, J.~Li, R.~Tomioka, and M.~Vojnovic.
\newblock Qsgd: Communication-efficient sgd via gradient quantization and
  encoding.
\newblock In {\em Advances in Neural Information Processing Systems}, pages
  1709--1720, 2017.

\bibitem{alistarh2018nips}
D.~Alistarh, T.~Hoefler, M.~Johansson, N.~Konstantinov, S.~Khirirat, and
  C.~Renggli.
\newblock The convergence of sparsified gradient methods.
\newblock In {\em Advances in Neural Information Processing Systems}, pages
  5977--5987, 2018.

\bibitem{bertsekas1989parallel}
D.~P. Bertsekas and J.~N. Tsitsiklis.
\newblock {\em Parallel and distributed computation: numerical methods},
  volume~23.
\newblock Prentice hall Englewood Cliffs, NJ, 1989.

\bibitem{chaturapruek2015asynchronous}
S.~Chaturapruek, J.~C. Duchi, and C.~R{\'e}.
\newblock Asynchronous stochastic convex optimization: the noise is in the
  noise and sgd don't care.
\newblock In {\em Advances in Neural Inf. Processing Systems}, 2015.

\bibitem{cui2016geeps}
H.~Cui, H.~Zhang, G.~R. Ganger, P.~B. Gibbons, and E.~P. Xing.
\newblock Geeps: Scalable deep learning on distributed gpus with a
  gpu-specialized parameter server.
\newblock In {\em Proc. of the 11th European Conf. on Computer Systems},
  page~4. ACM, 2016.

\bibitem{de2015taming}
C.~M. De~Sa, C.~Zhang, K.~Olukotun, C.~R\'{e}, and C.~R\'{e}.
\newblock Taming the wild: A unified analysis of hogwild-style algorithms.
\newblock In {\em Advances in Neural Information Processing Systems 28}, pages
  2674--2682. Curran Associates, Inc., 2015.

\bibitem{dekel2011optimal}
O.~Dekel, R.~Gilad-Bachrach, O.~Shamir, and L.~Xiao.
\newblock Optimal distributed online prediction.
\newblock In {\em Proc. of the 28th Int'l Conf. on Machine Learning (ICML-11)},
  pages 713--720, 2011.

\bibitem{greengard2019algorithm}
P.~Greengard and V.~Rokhlin.
\newblock An algorithm for the evaluation of the incomplete gamma function.
\newblock {\em Advances in Computational Mathematics}, 45(1):23--49, 2019.

\bibitem{gupta2016model}
S.~Gupta, W.~Zhang, and F.~Wang.
\newblock Model accuracy and runtime tradeoff in distributed deep learning: A
  systematic study.
\newblock In {\em 16th Int'l Conf. on Data Mining (ICDM),}, pages 171--180.
  IEEE, 2016.

\bibitem{ho2013more}
Q.~Ho, J.~Cipar, H.~Cui, S.~Lee, J.~K. Kim, P.~B. Gibbons, G.~A. Gibson,
  G.~Ganger, and E.~P. Xing.
\newblock More effective distributed ml via a stale synchronous parallel
  parameter server.
\newblock In {\em Advances in neural information processing systems}, pages
  1223--1231, 2013.

\bibitem{keskar2016large}
N.~S. Keskar, D.~Mudigere, J.~Nocedal, M.~Smelyanskiy, and P.~T.~P. Tang.
\newblock On large-batch training for deep learning: Generalization gap and
  sharp minima.
\newblock {\em arXiv:1609.04836}, 2016.

\bibitem{krizhevsky2009learning}
A.~Krizhevsky, G.~Hinton, et~al.
\newblock Learning multiple layers of features from tiny images.
\newblock Technical report, Citeseer, 2009.

\bibitem{lee2014model}
S.~Lee, J.~K. Kim, X.~Zheng, Q.~Ho, G.~A. Gibson, and E.~P. Xing.
\newblock On model parallelization and scheduling strategies for distributed
  machine learning.
\newblock In {\em Advances in neural information processing systems}, pages
  2834--2842, 2014.

\bibitem{li2014scaling}
M.~Li, D.~G. Andersen, J.~W. Park, A.~J. Smola, A.~Ahmed, V.~Josifovski,
  J.~Long, E.~J. Shekita, and B.-Y. Su.
\newblock Scaling distributed machine learning with the parameter server.
\newblock In {\em 11th Symp. on Operating Systems Design and Implementation},
  pages 583--598, 2014.

\bibitem{lian2015asynchronous}
X.~Lian, Y.~Huang, Y.~Li, and J.~Liu.
\newblock Asynchronous parallel stochastic gradient for nonconvex optimization.
\newblock In {\em Advances in Neural Information Processing Systems}, pages
  2737--2745, 2015.

\bibitem{liu2018towards}
T.~Liu, S.~Li, J.~Shi, E.~Zhou, and T.~Zhao.
\newblock Towards understanding acceleration tradeoff between momentum and
  asynchrony in nonconvex stochastic optimization.
\newblock In {\em Advances in Neural Information Processing Systems 31}, pages
  3686--3696. Curran Associates, Inc., 2018.

\bibitem{mcmahan2014delay}
B.~McMahan and M.~Streeter.
\newblock Delay-tolerant algorithms for asynchronous distributed online
  learning.
\newblock In {\em Advances in Neural Information Processing Systems 27}, pages
  2915--2923. Curran Associates, Inc., 2014.

\bibitem{mishkin2017systematic}
D.~Mishkin, N.~Sergievskiy, and J.~Matas.
\newblock Systematic evaluation of convolution neural network advances on the
  imagenet.
\newblock {\em Computer Vision and Image Understanding}, 161:11--19, 2017.

\bibitem{mitliagkas2016asynchrony}
I.~Mitliagkas, C.~Zhang, S.~Hadjis, and C.~R{\'e}.
\newblock Asynchrony begets momentum, with an application to deep learning.
\newblock In {\em Communication, Control, and Computing (Allerton), 2016 54th
  Annual Allerton Conf. on}, pages 997--1004. IEEE, 2016.

\bibitem{nguyen2018sgd}
L.~M. Nguyen, P.~H. Nguyen, M.~van Dijk, P.~Richt{\'a}rik, K.~Scheinberg, and
  M.~Tak{\'a}{\v{c}}.
\newblock Sgd and hogwild! convergence without the bounded gradients
  assumption.
\newblock {\em arXiv:1802.03801}.

\bibitem{polyak1964some}
B.~T. Polyak.
\newblock Some methods of speeding up the convergence of iteration methods.
\newblock {\em USSR Computational Mathematics and Mathematical Physics},
  4(5):1--17, 1964.

\bibitem{recht2011hogwild}
B.~Recht, C.~Re, S.~Wright, and F.~Niu.
\newblock Hogwild: A lock-free approach to parallelizing stochastic gradient
  descent.
\newblock In {\em Advances in Neural Information Processing Systems (NIPS) 24},
  pages 693--701. Curran Associates, Inc., 2011.

\bibitem{ruder2016overview}
S.~Ruder.
\newblock An overview of gradient descent optimization algorithms.
\newblock {\em arXiv:1609.04747}, 2016.

\bibitem{sallinen2016high}
S.~Sallinen, N.~Satish, M.~Smelyanskiy, S.~S. Sury, and C.~R{\'e}.
\newblock High performance parallel stochastic gradient descent in shared
  memory.
\newblock In {\em Parallel and Distributed Processing Symp., 2016 IEEE Int'l},
  pages 873--882. IEEE, 2016.

\bibitem{sra2015adadelay}
S.~Sra, A.~W. Yu, M.~Li, and A.~Smola.
\newblock Adadelay: Delay adaptive distributed stochastic optimization.
\newblock In {\em Proc. of the 19th Int'l Conf. on Artificial Intelligence and
  Statistics}, pages 957--965, 2016.

\bibitem{sutskever2013importance}
I.~Sutskever, J.~Martens, G.~Dahl, and G.~Hinton.
\newblock On the importance of initialization and momentum in deep learning.
\newblock In {\em Int'l Conf. on machine learning}, pages 1139--1147, 2013.

\bibitem{wen2017terngrad}
W.~Wen, C.~Xu, F.~Yan, C.~Wu, Y.~Wang, Y.~Chen, and H.~Li.
\newblock Terngrad: Ternary gradients to reduce communication in distributed
  deep learning.
\newblock In {\em Advances in neural information processing systems}, pages
  1509--1519, 2017.

\bibitem{werbos1982applications}
P.~J. Werbos.
\newblock Applications of advances in nonlinear sensitivity analysis.
\newblock In {\em System modeling and optimization}, pages 762--770. Springer,
  1982.

\bibitem{Zhang:2016:SAD:3060832.3060950}
W.~Zhang, S.~Gupta, X.~Lian, and J.~Liu.
\newblock Staleness-aware async-sgd for distributed deep learning.
\newblock In {\em Proceedings of the Twenty-Fifth Int'l Joint Conf. on
  Artificial Intelligence}, IJCAI'16, pages 2350--2356. AAAI Press, 2016.

\bibitem{zinkevich2010parallelized}
M.~Zinkevich, M.~Weimer, L.~Li, and A.~J. Smola.
\newblock Parallelized stochastic gradient descent.
\newblock In {\em Advances in neural information processing systems}, pages
  2595--2603, 2010.

\end{thebibliography}
\end{small}

\newpage
\appendix

\begin{proof}[Proof of Theorem \ref{theorem:synch_SGD_batch_problem}]
    Consider the case with two worker nodes. Assuming that the batches are disjoint, which is likely for large datasets, each SGD step is of the form
    \begin{align*}
        x_{t+1} &= \frac{ \big(x_t - \alpha \nabla f_{B_1}(x_t)\big) + \big(x_t - \alpha \nabla f_{B_2}(x_t)\big) }{2} \\
        &= x_t - \frac{\alpha}{2} \left( \nabla f_{B_1}(x_t) + \nabla f_{B_2}(x_t)\right)
    \end{align*}
    For mini-batch GD, i.e. a target function of the form (\ref{eq:mini-batch_SGD}), and with mini-batch size $b$, the above formula  becomes: %rewrites as
    \begin{align*}
        x_{t+1} &= w - \frac{\alpha}{2} \bigg( \nabla \frac{1}{b} \sum_{d \in B_1} L(d, x_t) + \nabla \frac{1}{b} \sum_{d \in B_2} L(d, x_t) \bigg)
    \end{align*}
    From linearity of the gradient, we have
    \begin{align*}
        x_{t+1} &= w - \alpha \nabla \frac{1}{2b} \sum_{d \in B_1 \cup B_2} L(d, x_t) \\
        &= w - \alpha \nabla f_{B_1 \cup B_2}(x_t)
    \end{align*}
    that corresponds to the SGD step with batch size $2b$. This inductively implies the theorem statement.
\end{proof}

\medskip

\begin{proof}[Proof of Theorem \ref{theorem:implicit_momentum_tuning}]
    We have from (\ref{eq:HOGWILD})
    \begin{align*}
        x_{t+1} - x_t
        &= - \alpha_t\nabla F(v_t) \\
        &= x_t - x_{t-1} - (x_t - x_{t-1}) - \alpha_t\nabla F(v_t) \\
        &= x_t - x_{t-1} + \alpha_t\nabla F(v_{t-1}) - \alpha_t\nabla F(v_t)
    \end{align*}
    Since the gradient and staleness processes are independent, we take first expectation conditioned on the staleness
    \begin{align*}
        \mathbf{E}[x_{t+1} - x_t \mid \tau_t, \tau_{t-1}]
        &= \mathbf{E}[x_t - x_{t-1} \mid \tau_t, \tau_{t-1}] \\
        &+ \alpha_t\nabla f(v_{t-1}) - \alpha_t\nabla f(v_t)
    \end{align*}
    Now, take expectation w.r.t. the stochastic staleness $\tau_t, \tau_{t-1}$
    \begin{align*}
        \mathbf{E}[&x_{t+1} - x_t]
        = \mathbf{E}[x_t - x_{t-1}] \\
        & \ \ + \mathbf{E}[\alpha_t\nabla f(v_{t-1})] - \mathbf{E}[\alpha_t\nabla f(v_t)] \\
        &= \mathbf{E}[x_t - x_{t-1}] + \sum_{i=0}^\infty P[\tau=i] \frac{\alpha\nabla f(x_{t-i-1})}{C^{i}p} \\
        & \ \ - \sum_{i=0}^\infty P[\tau=i] \frac{\alpha\nabla f(x_{t-i})}{C^{i}p} \\
        &= \mathbf{E}[x_t - x_{t-1}] + p \sum_{i=0}^\infty (1-p)^i \frac{\alpha\nabla f(x_{t-i-1})}{C^{i}p} \\
        & \ \ - p \sum_{i=0}^\infty (1-p)^i \frac{\alpha\nabla f(x_{t-i})}{C^{i}p} \\
        &= \mathbf{E}[x_t - x_{t-1}] - \alpha\nabla f(x_{t}) + \sum_{i=0}^\infty (1-p)^i \frac{\alpha\nabla f(x_{t-i-1})}{C^{i}} \\
        & \ \ - \sum_{i=1}^\infty (1-p)^i \frac{\alpha\nabla f(x_{t-i})}{C^{i}}
    \end{align*}
    \begin{align*}
        &= \mathbf{E}[x_t - x_{t-1}] - \alpha\nabla f(x_{t}) \\
        & \ \ + \sum_{i=0}^\infty \bigg( \frac{(1-p)^i}{C^{i}} - \frac{(1-p)^{i+1}}{C^{i+1}} \bigg) \alpha\nabla f(x_{t-i-1}) \\
        &= \mathbf{E}[x_t - x_{t-1}] - \alpha\nabla f(x_{t}) \\
        & \ \ + \sum_{i=0}^\infty \frac{(1-p)^i}{C^{i}} \bigg( 1 - \frac{1-p}{C} \bigg) \alpha\nabla f(x_{t-i-1}) \\
        &= \mathbf{E}[x_t - x_{t-1}] - \alpha\nabla f(x_{t}) \\
        & \ \ + \bigg( 1 - \frac{1-p}{C} \bigg) \sum_{i=0}^\infty \frac{p(1-p)^i}{C^{i}p^{i+1}} \alpha\nabla f(x_{t-i-1}) \\
        &= \mathbf{E}[x_t - x_{t-1}] - \alpha\nabla f(x_{t}) \\
        & \ \ + \bigg( 1 - \frac{1-p}{C} \bigg) \mathbf{E}[\alpha_t\nabla f(v_{t-1})] \\
        &= \mathbf{E}[x_t - x_{t-1}] - \alpha\nabla f(x_{t}) + \bigg( 1 - \frac{1-p}{C} \bigg) \mathbf{E}[x_t - x_{t-1}] \\
        &= \bigg( 2 - \frac{1-p}{C} \bigg) \mathbf{E}[x_t - x_{t-1}] - \alpha\nabla f(x_{t})
    \end{align*}
\end{proof}

\begin{proof}[Proof of Theorem \ref{theorem:cmp_sigma_zero}]
    We have
    \begin{align*}
        \Sigma_{p,\alpha}^\nabla
        &= \sum_{i=0}^{\infty} \big( p(i) \alpha (i) - p(i+1) \alpha (i+1) \big) \nabla f(x_{t-i-1})
    \end{align*}
    Substituting $p(i)$ for the $CMP$ PDF (\ref{eq:cmp_distribution}) gives
    \begin{align}
        \Sigma_{p,\alpha}^\nabla
        &= \frac{1}{Z(\lambda,\nu)} \sum_{i=0}^{\infty} \frac{\lambda^i}{(i!)^\nu} \left( \alpha (i) - \lambda \frac{\alpha (i+1)}{(i+1)^\nu} \right) \nabla f(x_{t-i-1}) \label{eq:sigma_CMP}
    \end{align}
    Now, applying the adaptive step size (\ref{eq:adap_alpha_cmp_sigma_zero}) gives
    \begin{align*}
        \Sigma_{p,\alpha}^\nabla
        = &\frac{C}{Z(\lambda,\nu)} \sum_{i=0}^{\infty} \frac{\lambda^i}{(i!)^\nu} \alpha \bigg( \lambda^{-i}(i!)^\nu - \\
        & \frac{\lambda}{(i+1)^\nu} \lambda^{-(i+1)}((i+1)!)^\nu \bigg) \nabla f(x_{t-i-1}) \\
        = &\frac{C}{Z(\lambda,\nu)} \sum_{i=0}^{\infty} \frac{\lambda^i}{(i!)^\nu} \alpha \bigg( \frac{(i!)^\nu}{\lambda^{i}} - \frac{(i!)^\nu}{\lambda^{i}} \bigg) \nabla f(x_{t-i-1}) = 0 \\
    \end{align*}
\end{proof}

\begin{proof}[Proof of Theorem \ref{theorem:cmp_sigma_tune}]
    Let $\Psi(i) = \alpha (i) - \lambda \frac{\alpha (i+1)}{(i+1)^\nu}$, and hence
    \begin{align*}
        \Sigma_{p,\alpha}^\nabla
        &= \frac{1}{Z(\lambda,\nu)} \sum_{i=0}^{\infty} \frac{\lambda^i}{(i!)^\nu} \Psi(i) \nabla f(x_{t-i-1})
    \end{align*}
    
    Applying the adaptive step size (\ref{eq:adap_alpha_cmp_sigma_tune}) gives
    \begin{align*}
        \Psi(i) = \frac{i!^\nu}{\lambda^i} e^\lambda \alpha \left( c(i) - c(i+1) \right)
    \end{align*}
    
    Now,
    \begin{align*}
        &\Psi(i) = K
        \Leftrightarrow c(i) - c(i+1) = \frac{K}{\alpha e^{\lambda}}\frac{\lambda^i}{i!^\nu} \\
        & \ \ \Leftrightarrow c(i) = c(i-1) - \frac{K}{\alpha e^{\lambda}}\frac{\lambda^{i-1}}{(i-1)!^\nu} \\
        &= c(0) - \frac{K}{\alpha e^{\lambda}} \sum_{j=1}^i \frac{\lambda^{i-j}}{(i-j)!^\nu}
        = c(0) - \frac{K}{\alpha e^{\lambda}} \sum_{j=1}^i \frac{\lambda^{j}}{(j)!^\nu}
    \end{align*}
    
    Since $\alpha(0) = \alpha$, we have $c(0) = 1$. Now we have
    \begin{align*}
        \Sigma_{p,\alpha}^\nabla
        &= K \sum_{i=0}^{\infty} \frac{1}{Z(\lambda,\nu)} \frac{\lambda^i}{(i!)^\nu} \nabla f(x_{t-i-1}) \\
        &= K \mathbf{E} \left[ \nabla f(v_{t-1}) \right]
        = K \mathbf{E} \left[ x_t - x_{t-1} \right]
    \end{align*}
\end{proof}

\begin{proof}[Proof of Corollary \ref{cor:Pois_sigma_tune}]
    Under the Poisson $\tau$ model, which is $CMP$ with $\nu=1$, (\ref{eq:cmp_adap_factor}) rewrites to
    \begin{align*}
        c(i) &= 1 - \frac{K}{\alpha e^\lambda} \sum_{j=0}^{\tau-1} \frac{\lambda^j}{(j!)}
        = 1 - \frac{K}{\alpha} \frac{\Gamma(i,\lambda)}{(i-1)!} \\
        &= 1 - \frac{K}{\alpha} \frac{\Gamma(i,\lambda)}{\Gamma(i)}
    \end{align*}
\end{proof}

\begin{proof}[Proof of Theorem \ref{theorem:convex_convergence}]
    \begin{align*}
        \|&x_{t+1} - x^*\|^2
        = \|x_t - \alpha_t \nabla F(v_t) - x^*\|^2 \\
        &= \|x_t - x^*\|^2 + \alpha_t^2 \|\nabla F(v_t)\|^2 - 2\alpha_t(x_t - x^*)^T\nabla F(v_t) \\
        &= \|x_t - x^*\|^2 + \alpha_t^2 \|\nabla F(v_t)\|^2 - 2\alpha_t(x_t - x^*)^T \nabla F(x_t) \\
        & \ \ + 2\alpha_t(x_t - x^*)^T\big(\nabla F(x_t) - \nabla F(v_t)\big)
    \end{align*}
    
    Under expectation, conditioned on the natural filtration $\mathbb{F}_t^X = \big( (\tau_i)_{i=0}^{t}, \big(\nabla F(v_i)\big)_{i=0}^{t} \big)$ of the \textit{past} of the process, we have
    \begin{align*}
        &\mathbf{E}\big[\|x_{t+1} - x^*\|^2 \mid \tau_t, \mathbb{F}_{t-1}^X \big]
        = \|x_t - x^*\|^2 \\
        & \ \ - 2\alpha_t \mathbf{E}\big[ (x_t - x^*)^T \big( \nabla F(x_t) - \nabla F(x^*) \big) \big| \mathbb{F}_{t-1}^X \big] \\
        & \ \ + 2\alpha_t \mathbf{E}\big[ (x_t - x^*)^T\big(\nabla F(x_t) - \nabla F(v_t)\big) \big|  \mathbb{F}_{t-1}^X \big]
    \end{align*}
    
    Applying the assumptions (\ref{condition:convexity})-(\ref{condition:moment}) gives
    \begin{align*}
        &\mathbf{E}\big[\|x_{t+1} - x^*\|^2 \big| \tau_t, \mathbb{F}_{t-1}^X \big]
        \leq \|x_t - x^*\|^2 + M^2 \alpha_t^2 \\
        & \ \ - 2\alpha_t c \|x_t - x^*\|^2 + 2\alpha_t L \|x_t - x^*\| \|x_t - v_t\| \\
        &= (1-2c\alpha_t)\|x_t - x^*\|^2 + M^2 \alpha_t^2 \\
        & \ \ + 2\alpha_t L \|x_t - x^*\| \|x_t - v_t\| \\
        &= (1-2c\alpha_t)\|x_t - x^*\|^2 + M^2 \alpha_t^2 \\
        & \ \ + 2\alpha_t L \|x_t - x^*\| \sum_{i=1}^{\tau_t} x_{t-i+1} - x_{t-i}\| \\
        &\leq (1-2c\alpha_t)\|x_t - x^*\|^2 + M^2 \alpha_t^2 \\
        & \ \ + 2\alpha_t L \sum_{i=1}^{\tau_t} \|x_t - x^*\| \alpha_{t-i} \| \nabla F(v_{t-i}) \|
    \end{align*}
    
    The gradient process does not influence the expected delays, so we first consider the expectation conditioned on the gradient process $(\nabla)_0^t := \big(\nabla F(v_i)\big)_{i=0}^{t}$
    \begin{align*}
        &\mathbf{E}\left[\|x_{t+1} - x^*\|^2 \big| \tau_t, (\nabla)_0^t \right] \\
        & \ \ \leq (1-2c\alpha_t) \mathbf{E}\big[\|x_t - x^*\|^2 \big| \tau_t, (\nabla)_0^t \big] + M^2 \alpha_t^2 \\
        & \ \ + 2 L \alpha_t \sum_{i=1}^{\tau_t} \mathbf{E}\big[ \alpha_{t-i} \|x_t - x^*\| \big| \tau_t, (\nabla)_0^t \big] \| \nabla F(v_{t-i})\|
    \end{align*}
    From the non-anticipativity of the delay process we have
    \begin{align*}
        &\mathbf{E} \left[ \alpha_{t-i} \|x_t - x^*\| \mid \tau_t, (\nabla)_0^t \right] \\
        &= \mathbf{E} \left[ \mathbf{E} \left[ \alpha_{t-i} \|x_t - x^*\| \mid x_t \right] \mid \tau_t, (\nabla)_0^t \right] \\
        &= \mathbf{E} \left[ \|x_t - x^*\| \mathbf{E} \left[ \alpha_{t-i} \mid x_t \right] \mid \tau_t, (\nabla)_0^t \right] \\
        &= \mathbf{E} \left[ \alpha_{t-i} \right] \mathbf{E} \left[ \|x_t - x^*\| \mid (\nabla)_0^t \right]
    \end{align*}
    
    Since the delays and gradients are identically distributed we have $\mathbf{E}[\alpha_i] = \mathbf{E}[\alpha_j]$ for all $i,j$. Taking expectation conditioned on the last delay $\tau_t$ and applying Hölder's inequality gives
    \begin{align*}
        &\mathbf{E}\left[\|x_{t+1} - x^*\|^2 \mid \tau_t \right]
        \leq (1-2c\alpha_t) \mathbf{E}\big[\|x_t - x^*\|^2 \big] + M^2 \alpha_t^2 \\
        & \ \ + 2 L \tau_t \alpha_t \mathbf{E} \left[ \alpha_t \right] \sqrt{\mathbf{E} \left[ \|x_t - x^*\|^2 \right]} \sqrt{\mathbf{E} \left[ \| \nabla F(v_{t})\|^2 \right]}
    \end{align*}
    and the full expectation satisfies
    \begin{align*}
        &\mathbf{E}\left[\|x_{t+1} - x^*\|^2 \right] \leq (1-2c \mathbf{E}\left[\alpha_t\right] ) \mathbf{E}\big[\|x_t - x^*\|^2 \big] \\
        & \ \ + M^2 \mathbf{E}\left[\alpha_t^2\right] + 2 L M \mathbf{E}\left[ \tau_t \alpha_t \right] \mathbf{E} \left[ \alpha_{t} \right] \sqrt{\mathbf{E} \left[ \|x_t - x^*\|^2 \right]}
    \end{align*}
    
    As long as the process has not yet converged, i.e. $\mathbf{E}\big[\|x_t - x^*\|^2\big] > \epsilon$, we have
    \begin{align*}
        &\mathbf{E} \left[ \|x_{t+1} - x^*\|^2 \right]
        \leq \mathbf{E}\big[\|x_t - x^*\|^2 \big] ( 1-2c \mathbf{E}\left[\alpha_t\right] \\
        & \ \ + \epsilon^{-1}M^2 \mathbf{E}\left[\alpha_t^2\right] + 2 L M \epsilon^{1/2} \mathbf{E}\left[ \tau_t \alpha_t \right] \mathbf{E} \left[ \alpha_{t} \right] ) \\
        & \ \ =: \mathbf{E}\big[\|x_t - x^*\|^2] (1 - \delta) \\
        & \ \ \Rightarrow \mathbf{E}\left[\|x_T - x^*\|^2\right] \leq \mathbf{E}\left[\|x_0 - x^*\|^2\right] (1 - \delta)^T \\
        & \ \ \Rightarrow T \leq - \ln(1-\delta)^{-1} \ln \frac{\mathbf{E}\big[\|x_0 - x^*\|^2\big]}{\mathbf{E}\big[\|x_T - x^*\|^2\big]} \\
        & \ \ < \delta^{-1} \ln \left( \mathbf{E}\big[\|x_0 - x^*\|^2\big] \epsilon^{-1} \right)
    \end{align*}
    for any $T$ such that $\mathbf{E}\big[\|x_{T} - x^*\|^2] > \epsilon$. Equivalently, expected convergence is implied by $T$ exceeding the bound above, which concludes the proof.
\end{proof}

\begin{proof}[Proof of Corollary \ref{corollary:constant_alpha_convergence}]
    Let $\rho = \frac{c \epsilon M^{-1}}{M + 2 L \sqrt{\epsilon} \bar\tau}$. From Theorem \ref{theorem:convex_convergence} we have the improvement factor
    \begin{align*}
        \delta &= 2 \big( c - L M \epsilon^{1/2} \mathbf{E}\left[ \tau \alpha \right] \big) \mathbf{E}\left[\alpha\right] - \epsilon^{-1}M^2 \mathbf{E}\left[\alpha^2\right] \\
        &= 2 c \alpha -\epsilon^{-1}M \left( M + 2L\sqrt{\epsilon} \bar\tau \right) \alpha^2 \\
        &= c\rho^{-1}\alpha(2\rho - \alpha) \\
    \end{align*}
    so $\delta>0$ when $0<\alpha<2\rho$, and the improvement is maximized for $\theta=1$. Now, using the choice (\ref{eq:alpha_choice}) of step size, we have
    \begin{align*}
        \delta &= c \rho^{-1} \theta \rho(2\rho - \theta \rho) \\
        &= \theta(2 - \theta) c \rho
    \end{align*}
    Substituting for $\rho$, the convergence bound of Theorem \ref{theorem:convex_convergence} now rewrites to (\ref{eq:theta_convergence})
\end{proof}

\remove{

\begin{proof}[Proof of Corollary \ref{corollary:tau_adapt_concergence}]
    Using the choice of step size (\ref{eq:alpha_choice_linear_adaptive}) and the linearity of expectation, the improvement factor $\delta$ of Theorem \ref{theorem:convex_convergence} becomes
    \begin{align*}
        \delta &= 2 \big( c - L M \epsilon^{1/2} \mathbf{E}\left[ \tau \alpha \right] \big) \mathbf{E}\left[\alpha\right] - \epsilon^{-1}M^2 \mathbf{E}\left[\alpha^2\right] \\
        &= \theta \rho \bar\tau \left( 2 \big( c - LM\epsilon^{-1/2}\theta\rho \bar\tau \big) \mathbf{E}[\tau^{-1}] - \epsilon^{-1}M^2\theta \rho \bar\tau \mathbf{E}[\tau^{-2}] \right) \\
        &= \theta \rho \bar\tau \mathbf{E}\left[ 2 \big( c - LM\epsilon^{-1/2}\theta\rho \bar\tau \big) \tau^{-1} - \epsilon^{-1}M^2\theta \rho \bar\tau \tau^{-2}\right]
    \end{align*}
    
    It is easily seen that the restriction (\ref{eq:theta_restriction}) on $\theta$ ensures that the above expression is convex in $\tau$. Jensen's inequality gives
    \begin{align*}
        \delta &\geq \theta \rho \bar\tau \left( 2 \big( c - LM\epsilon^{-1/2}\theta\rho \bar\tau \big) \bar\tau^{-1} - \epsilon^{-1}M^2\theta \rho \bar\tau \bar\tau^{-2}\right) \\
        &= \theta \rho \left( 2c - \theta\rho \epsilon^{-1}M (M + 2LM\sqrt{\epsilon}\bar\tau ) \right) \\
        &= \theta \left( 2 - \theta \right) c \rho
    \end{align*}
\end{proof}

}

\begin{proof}[Proof of Corollary \ref{corollary:decaying_alpha_convergence}]
    Since $\alpha_t$ is a non-increasing function in $\tau_t$ we have
    \begin{align*}
        \mathbf{E}\left[ \tau_t \alpha(\tau_t) \right]
        &= \mathbf{E}\left[ \tau_t \alpha(\tau_t) \right] - \mathbf{E}\left[ \bar\tau \alpha(\tau_t) \right] + \mathbf{E}\left[ \tau_t \right] \mathbf{E}\left[ \alpha(\tau_t) \right] \\
        &= \mathbf{E}\left[ (\tau_t - \bar\tau)(\alpha(\tau_t) - \alpha(\bar\tau)) \right] + \mathbf{E}\left[ \tau \right] \mathbf{E}\left[ \alpha \right] \\
        &\leq \mathbf{E}\left[ \tau \right] \mathbf{E}\left[ \alpha \right]
    \end{align*}
    
    Using this property, (\ref{eq:convex_convergence}) rewrites to (\ref{eq:decaying_alpha_convergence})
\end{proof}

\end{document}